\documentclass[12pt]{article}
\usepackage{fullpage}
\usepackage{amssymb, amsthm, amsmath}
\usepackage{graphicx}
\usepackage[authoryear]{natbib}
\usepackage{bm}
\usepackage{times}
\usepackage{multirow}
\usepackage{float}
\usepackage{bbm}
\usepackage{mathtools}
\usepackage{setspace}
\usepackage{comment}

\usepackage[linesnumbered,ruled,vlined]{algorithm2e}
\usepackage{url}
\usepackage{enumitem}

\usepackage{caption}
\captionsetup{font={stretch=1.1}}
\setlength{\tabcolsep}{10pt}

\newcommand{\calD}{{\cal D}}

\newcommand{\calU}{{\cal U}}
\newcommand{\calI}{{\cal I}}

\newcommand{\RR}{\mathbb{R}}

\newcommand{\rank}{\mathrm{rank}}

\newcommand{\Matern}{ \mbox{Mat$\acute{\mbox{e}}$rn}}

\newcommand{\prob}{\mathsf{P}}
\newcommand{\E}{\mathsf{E}}
\newcommand{\var}{\mathsf{V}}

\newcommand{\beq}{ \begin{equation}}
\newcommand{\eeq}{ \end{equation}}
\newcommand{\beqn}{ \begin{eqnarray}}
\newcommand{\eeqn}{ \end{eqnarray}}

\makeatletter 
\g@addto@macro{\@algocf@init}{\SetKwInOut{parameter}{Tuning parameter}} 
\makeatother

\usepackage{color}

\theoremstyle{plain} 
\newtheorem{thm}{Theorem}

\newtheorem{prop}{Proposition}
\newtheorem{lem}{Lemma}

\theoremstyle{remark}


\title{Valid model-free spatial prediction}
\author{Huiying Mao\footnote{The Statistical and Applied Mathematical Sciences Institute} \quad 
   Ryan Martin\footnote{North Carolina State University} \quad Brian J Reich$^\dagger$}
\date{\today}

\begin{document}


\maketitle

\begin{abstract}
Predicting the response at an unobserved location is a fundamental problem in spatial statistics.  Given the difficulty in modeling spatial dependence, especially in non-stationary cases, model-based prediction intervals are at risk of misspecification bias that can negatively affect their validity. Here we present a new approach for model-free nonparametric spatial prediction based on the {\em conformal prediction} machinery.  Our key observation is that spatial data can be treated as exactly or approximately exchangeable in a wide range of settings.  In particular, under an infill asymptotic regime, we prove that the response values are, in a certain sense, locally approximately exchangeable for a broad class of spatial processes, and we develop a local spatial conformal prediction algorithm that yields valid prediction intervals without strong model assumptions like stationarity. Numerical examples with both real and simulated data confirm that the proposed conformal prediction intervals are valid and generally more efficient than existing model-based procedures for large datasets across a range of non-stationary and non-Gaussian settings.

\smallskip

\noindent {\em Keywords}: Conformal prediction; Gaussian process; Kriging; non-stationary; plausibility.

\end{abstract}

\section{Introduction}\label{s:intro}


Providing valid predictions of the response at an unobserved location is a fundamental problem in spatial statistics.  For example, epidemiologists may wish to extrapolate air pollution concentrations from a network of stationary monitors to the residential locations of the study participants.  There are a number of challenges one faces in carrying out valid prediction at a new spatial location, but one of the most pressing is that existing methods are model-based, so the reliability of the predictions depends crucially on the soundness of the posited model. For example, prediction intervals based on Kriging---see \citet{cressie1992statistics} and Section~\ref{s:back:spatial}---often rely on normality, and stationarity is often assumed to facilitate estimating the spatial covariance function required for Kriging. It is now common to perform geostatistical analysis for massive datasets collected over a vast and diverse spatial domain \citep[e.g.,][]{heaton2019case}.  For complex processes observed over a large domain, the normality and stationarity assumptions can be questionable. Failing to account for nonstationarity can affect prediction accuracy, but typically has a larger effect on uncertainty quantification such as prediction intervals \citep{fuglstad2015does}.  While there are now many methods available for dealing with nonstationary \citep[see][for a recent review]{risser2016nonstationary} and non-Gaussianity  \citep[e.g.,][]{gelfand2005bayesian,duan2007generalized,reich2007multivariate,rodriguez2011nonparametric}, these typically involve heavy computations.  This exacerbates the already imposing computational challenges posed by massive datasets. Further, fitting the entire stochastic process may be unnecessary if only prediction intervals are desired.  Nonparametric machine-learning methods can be used for prediction \citep{kim2016accurate, kim2016deeply, lim2017enhanced, tai2017image,hengl2018random,franchi2018supervised,wang2019nearest,li2020deepkriging}, but these methods typically focus on uncertainty estimation. In this paper we propose a method with provably valid prediction intervals---exact in some cases, asymptotically approximate in others---for the response at a single location without requiring specification of a statistical model, and hence not inheriting the risk of model misspecification bias.



In recent years, the use of machine learning techniques in statistics has become increasingly more common.  While there are numerous examples of this phenomenon, the one most relevant here is {\em conformal prediction}. This method originated in \citet{vovk2005algorithmic} and the references therein \citep[see, also,][]{shafer2008tutorial}, but has appeared frequently in the recent statistics literature \citep[e.g.,][]{lei2014distribution, lei2018distribution, guan2019conformal, romano2019conformalized, conformal.shift}. What makes this method especially attractive is that it provides provably valid prediction intervals without specification of a statistical model.  More precisely, the conformal prediction intervals achieve the nominal frequentist prediction coverage probability, uniformly over all data distributions; see  Section~\ref{s:back_conformal}.  The crucial assumption behind the validity of conformal prediction is that the data are exchangeable.  

Whether it is reasonable to assume exchangeability in a spatial application depends on how the data are sampled.  On the one hand, if the locations are randomly sampled in the spatial domain, then exchangeability holds automatically; see Lemma~\ref{lem:iid.locations}.  In such cases, standard conformal prediction can be used basically off the shelf.  On the other hand, if the locations are fixed in the spatial domain, then exchangeability does not hold in general.  We show, however, that for a wide range of spatial processes, the response variables at tightly concentrated locations are approximately exchangeable; see Theorem~\ref{thm:local.exchangeable}.  Therefore, a version of the basic conformal prediction method applied to these tightly concentrated observations ought to be approximately valid.


Using this insight about the connection between exchangeability and the sampling design, we propose two related spatial conformal prediction methods.  The first, a so-called {\em global spatial conformal prediction} (GSCP) method, described in Section~\ref{s:GSCP}, is designed specifically for cases where the spatial locations are sampled at random.  In particular, this global method produces a prediction interval which is marginally valid, i.e., valid on average with respect to the distribution of the target location at which prediction is desired; asymptotic efficiency of this global method is also investigated.  The second, a {\em local spatial conformal prediction} (LSCP) method, described in Section~\ref{s:LSCP}, is designed specifically for the case when the spatial locations are fixed.  Since our goal is to proceed without strong assumptions about the spatial dependence structure, it is only possible to establish approximate or local exchangeability.  Therefore, the proposed local spatial conformal prediction method can only provide approximately valid predictions; see Theorem~\ref{thm:valid.local}.  But our goal in the fixed-location case resembles the ``conditional validity'' target in the conformal prediction literature \citep[e.g.,][]{barber2019limits, chernozukov.dconformal} so, given the impossibility theorems in the latter context, approximate validity is all that can be expected.  

For both the global and local formulations, our proposed method is computationally feasible for large datasets and model-free in the sense that its validity does not depend on a correctly-specified model. In Sections~\ref{s:sim} and \ref{s:app}, we show using real and simulated data that the proposed methods outperform both standard global Kriging and local approximate Gaussian process regression \citep[laGP;][]{gramacy2015local} for non-stationary and non-Gaussian data.  In addition to be useful for spatial applications, it is also an advancement in conformal prediction to the case of dependent data, and establishes the conditions on the spatial sampling design and data-generating mechanism that ensure (approximate) validity of the conformal prediction intervals.      

The remainder of this paper is organized as follows. Section~\ref{s:back} reviews spatial and conformal prediction. Sections \ref{s:GSCP} and \ref{s:LSCP} introduce the proposed methods, which are examined using simulations in Section \ref{s:sim} and a real data analysis in Section \ref{s:app}. Additional numerical and theoretical results, along with all proofs, are given in the Supplemental Materials. 
 
\section{Background} \label{s:back}

\subsection{Spatial prediction methods} \label{s:back:spatial}

Let $Y_i\in \RR$ and $X_i \in \RR^d$, with $d \geq 1$, be the observable pairs at spatial location $s_i\in\calD\subseteq \RR^2$.  Note that $X_i$ can include covariates that are deterministic functions of the spatial location, such as elevation, genuinely stochastic covariates like wind speed, or even non-spatial covariates such as the smoking status of the resident at location.  Write the data points as triples $Z_i = (s_i, X_i, Y_i)$, for $i=1,2,\ldots$.  We assume only a single observation is made at each location and thus often adopt the notation $Y_i = Y(s_i)$ and $X_i = X(s_i)$.

Geostatistical analysis often assumes that the data follow a Gaussian process model, $Y_i = X_i^\top \beta + \theta_i + \varepsilon_i$, for $ i=1,\ldots,n$, where $\beta$ is the vector of regression coefficients, $\varepsilon_1,\ldots,\varepsilon_n$ are independent $\mbox{Normal}(0,\tau^2)$ errors, $\theta_i = \theta(s_i)$, and $\theta$ is a mean-zero Gaussian process with isotropic covariance function $C(\theta_i,\theta_j) = \sigma^2 \rho(d_{ij})$, a function of the distance $d_{ij}$ between locations $s_i$ and $s_j$.  A common example is the $\Matern$ correlation function $\rho(d; \phi, \kappa)$ parameterized  by  correlation range $\phi$ and smoothness parameter $\kappa$. Denote the spatial covariance parameters as $\Theta=\{\sigma^2,\tau^2,\phi, \kappa\}$. The main assumptions of this model are that the data are Gaussian and the covariance function is stationary and isotropic, i.e., it is a function only of the distance between spatial locations and is thus the same across the spatial domain.

Consider data $Z^{n+1} = (Z_1,\ldots,Z_n, Z_{n+1})$.  In our applications, $Z^n$ will be observed and $(s_{n+1}, X_{n+1})$ will be given, and the goal is to predict the corresponding $Y_{n+1}$.  However, the ordering of the data is irrelevant so one can imagine different orderings that correspond to data point $i$ in the last position, where $i=1,\ldots,n,n+1$.  That is, imagine we have the observed data $z_{(i)}^{n+1} = \{z_1,\ldots,z_{n+1}\} \setminus \{z_i\}$, along with $(s_i, x_i)$ and parameter estimates $\widehat\beta$ and $\widehat\Theta$; then the predictive distribution of $Y_i$ is normal with mean 
$\hat\mu_{n+1,i}(s_i, x_i)$ and variance $\sigma_{n+1,i}^2(s_i, x_i)$, where both the mean and variance depend on ${\widehat \Theta}$ and the configuration of the spatial locations; see the Supplementary Material for the specific expressions.  The standardized residuals are 
\begin{equation}
\label{eq:residual}
e_{n+1,i} = \frac{y_{i} - \hat\mu_{n+1,i}(s_i, x_i)}{\hat\sigma_{n+1,i}(s_i, x_i)}, \quad i=1,\ldots,n,n+1,
\end{equation}
and the corresponding $100(1-\alpha)\%$ prediction interval for $Y_i$ is
$ \hat\mu_{n+1,i}(s_i, x_i) \pm q_\alpha^\star \, \hat\sigma_{n+1,i}(s_i, x_i)$,
where $q_\alpha^\star$ is the upper $\alpha/2$ quantile of a standard normal distribution.

\subsection{Conformal prediction} \label{s:back_conformal}

Here we take a step back and review conformal prediction for non-spatial problems; for a detailed treatment, see \cite{vovk2005algorithmic} and \cite{shafer2008tutorial}. Suppose we have a data sequence $Z_1,\ldots, Z_n, Z_{n+1},\ldots$, assumed to be exchangeable with joint distribution $\prob$, that is, $Z_1,Z_2,Z_3,\dots$ and $Z_{\xi(1)}, Z_{\xi(2)},Z_{\xi(3)},\ldots$ have the same joint distribution for any permutation $\xi$ defined on the positive integers. This data may be response-only, i.e., $Z_i = Y_i$, or may be response-covariate pairs, i.e., $Z_i = (X_i, Y_i)$; we will focus on the latter more general case.  No assumptions about $\prob$ are made here, beyond that it is exchangeable. We observe $Z^n=z^n$, and the goal is to predict $Y_{n+1}$ at a new value $X_{n+1}$ of the covariate.  More specifically, we seek a procedure that returns, for any $\alpha \in (0,1)$, a prediction interval  $\Gamma^\alpha(Z^n; X_{n+1})$ that is {\em valid} in the sense that  
\begin{equation}
\label{eq:valid}
\prob^{n+1}\{\Gamma^\alpha(Z^n; X_{n+1}) \ni Y_{n+1} \} \geq 1-\alpha, \quad \text{for all $(\alpha, n, \prob)$}, 
\end{equation}
where $\prob^{n+1}$ is the distribution of $(Z_1,\ldots,Z_n,Z_{n+1})$ under $\prob$.  That we require the inequality \eqref{eq:valid} to hold for {\em all exchangeable distributions $\prob$} rules out the use of model-based procedures, such as likelihood or Bayesian methods.   

The original conformal prediction method proceeds as follows.  Define a {\em non-conformity measure} $\Delta(B,z)$, a function that takes two arguments: the first is a ``bag'' $B$ that consists of a finite collection of data points; the second is a single data point $z$.  Then $\Delta(B,z)$ measures how closely $z$ represents the data points in bag $B$.  For example, if $\pi$ is a prediction rule and $d$ is some measure of distance, then we might take $\Delta(B,z) = d(\pi_B, z)$, the distance between $z$ and the value $\pi_B$ returned by the prediction rule $\pi$ applied to $B$.  The choice of $\Delta$ depends on the context, though often there is a natural choice.  Throughout this paper we will assume $\Delta$ is symmetric in its first argument, so that shuffling the data in bag $B$ does not change the value of $\Delta(B,z)$.  

Given the non-conformity measure $\Delta$, the next step is to appropriately transform the data via $\Delta$.  Specifically, augment the observed data $z^n = (z_1, \ldots, z_n)$ with a provisional value $z_{n+1}$ of $Z_{n+1} = (X_{n+1}, Y_{n+1})$; this $z_{n+1}$ value is generic and free to vary.  Define 
\[ \delta_i = \Delta(z_{(i)}^{n+1}, z_i), \quad i=1,\ldots,n,n+1. \]
Note that $\delta_{n+1}$ is special because it compares the actual observed data with this provisional value of the unobserved future observation.  Next, compute the {\em plausibility} \citep[e.g.,][]{cella2020valid} of $z_{n+1}$ as a value for $Z_{n+1}$ according to the formula 
\begin{equation}
\label{eq:plausibility}
p(y_{n+1} \mid z^n, x_{n+1}) = \frac{1}{n+1} \sum_{i=1}^{n+1} 1\{\delta_i \geq \delta_{n+1}\},
\end{equation}
where $1\{A\}$ denotes the indicator of event $A$.  Note that this process can be carried out for any provisional value $z_{n+1}$, so the result is actually a mapping $\tilde y \mapsto p(\tilde y \mid z^n, \tilde x)$, for a given $\tilde x$, which we will refer to as the {\em plausibility contour} returned by the conformal algorithm.  This function can be plotted 
to visualize the uncertainty about $Y_{n+1}$ based on the given $\tilde x$, data $z^n$, the choice of non-conformity measure, etc.  Moreover, a prediction set $\Gamma^\alpha(z^n; \tilde x)$ can be obtained as 
\begin{equation}
\label{eq:preset.review}
\Gamma^\alpha(z^n; \tilde x) = \{\tilde y: p(\tilde y \mid z^n, \tilde x) > \alpha\}. 
\end{equation}

\section{Global spatial conformal prediction} \label{s:GSCP}

\subsection{GSCP algorithm}

The first approach we consider is a direct application of the original conformal algorithm to spatial prediction but with spatial dependence encoded in the non-conformity measure.  In contrast to the local algorithm presented in Section~\ref{s:LSCP}, this method equally weights the non-conformity across all spatial locations in the plausibility contour evaluation.  Therefore, we refer to this as global spatial conformal prediction, or GSCP for short. 

From Section~\ref{s:back:spatial}, recall that $z_i=(s_i, x_i, y_i)$ and $z^{n+1} = \{z_1,\ldots,z_{n+1}\}$; also, $z_{(i)}^{n+1}$ denotes $z^{n+1} \setminus \{z_i\}$, the full data set with $z_i$ excluded. Now define the non-conformity measure for the GSCP algorithm as
\begin{equation}
\label{e:A_stat}
 \delta_i = \Delta(z_{(i)}^{n+1}, z_i) = \Bigl| \frac{y_i - \hat\mu_{n+1,i}(s_i, x_i)}{\hat\sigma_{n+1,i}(s_i, x_i)} \Bigr|,
 \quad i=1,\ldots,n+1, 
\end{equation}
where $\hat\mu_{n+1,i}(s_i,x_i)$ and $\hat\sigma_{n+1,i}(s_i, x_i)$ are, respectively, the mean response and standard error estimates at spatial location $s_i$ with covariate $x_i$ based on data in $z_{(i)}^{n+1}$. Then we define a plausibility contour exactly like in Section~\ref{s:back_conformal}, with obvious notational changes.  That is, for provisional values $(s_{n+1}, x_{n+1}, y_{n+1})$ of $(S_{n+1}, X_{n+1}, Y_{n+1})$, we have  
\begin{equation}\label{eq:GSCP:plausibility}
 p(y_{n+1} \mid z^n; s_{n+1}, x_{n+1}) = \frac{1}{n+1} \sum_{i=1}^n 1\{\delta_i \geq \delta_{n+1}\}. 
\end{equation}
The corresponding $100(1-\alpha)$\% prediction interval for $Y_{n+1}$, denoted by $\Gamma^\alpha(z^n; s_{n+1}, x_{n+1})$, is just an upper level set of the plausibility contour, consisting of all those provisional $y_{n+1}$ values with plausibility exceeding $\alpha$, analogous to \eqref{eq:preset.review}. 

Any reasonable choice of $(\hat\mu,\hat\sigma)$ estimates can serve the purpose here, including inverse distance weighting predictions \citep{henley2012nonparametric}, deep learning predictions \citep{franchi2018supervised}, 
Kriging predictions, etc. 
In our numerical results presented below, we use the Kriging estimates as defined in Section~\ref{s:back:spatial}, so that $\delta_i$ is a standardized Kriging residual, $|e_i|$, from \eqref{eq:residual}. Conformal prediction is invariant to monotone transformations of its $\delta_i$'s, and we found that similar results are obtained with other related measures, such as unstandardized Kriging residuals.  A particular advantage of our recommended choice of $\delta_i$'s is that we can quickly compute the plausibility contour and prediction interval by exploiting the inherent quadratic structure of the Kriging-based non-conformity measure; see the Supplementary Materials.
Moreover, note that validity of the GSCP-based prediction intervals does not require the Gaussian model associated with the Kriging method be correctly specified, nor does it depend on our choice of the $\delta_i$'s.

\subsection{Theoretical validity of GSCP}
\label{s:theory_global}

Given the importance of exchangeability to the validity of conformal prediction and the fact that the spatial dependence generally is incompatible with exchangeability, we might have some concerns about the validity of GSCP.  However, there are practically relevant cases in which exchangeability does hold, in particular, when the spatial locations are sampled independently and identically distributed (iid).  The following elementary lemma explains this.

\begin{lem}
\label{lem:iid.locations}
If the spatial locations $S_1,S_2,\ldots$ are iid, then $Z_1,Z_2,\ldots$, with $Z_i = (S_i, X(S_i), Y(S_i))$, is an exchangeable sequence.  
\end{lem}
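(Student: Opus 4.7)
The plan is to condition on the entire realization of the random field $Y(\cdot)$ and use the fact that, given $Y$, each $Z_i$ is a deterministic (measurable) function of $S_i$ alone. Concretely, define the map $g_Y : \calD \to \calD \times \RR$ by $g_Y(s) = (s, Y(s))$, so that $Z_i = g_Y(S_i)$. The implicit standing assumption is that the sampling locations $S_1, S_2, \ldots$ are iid and independent of the process $Y$; this is the standard ``random design'' setup, and without it the claim cannot be expected to hold.

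First I would fix an arbitrary $n$, an arbitrary permutation $\pi$ of $\{1,\ldots,n\}$, and an arbitrary measurable set $B \subseteq (\calD \times \RR)^n$. The goal is to show
\[
\prob\{(Z_1,\ldots,Z_n) \in B\} = \prob\{(Z_{\pi(1)},\ldots,Z_{\pi(n)}) \in B\}.
\]
Conditioning on $Y$, independence of $Y$ and the $S_i$'s plus the iid property of the $S_i$'s gives that $S_1,\ldots,S_n$ remain iid under the conditional law. Since $Z_i = g_Y(S_i)$, and $g_Y$ does not depend on $i$, the random variables $Z_1,\ldots,Z_n$ are iid conditional on $Y$, hence in particular exchangeable conditional on $Y$. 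Thus
\[
\prob\{(Z_1,\ldots,Z_n) \in B \mid Y\} = \prob\{(Z_{\pi(1)},\ldots,Z_{\pi(n)}) \in B \mid Y\}
\]
almost surely. Taking expectations with respect to the distribution of $Y$ gives the desired equality of unconditional probabilities, proving finite exchangeability; since $n$ was arbitrary, the entire sequence $Z_1, Z_2, \ldots$ is exchangeable.

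Since the argument reduces to ``mixture of iid distributions is exchangeable,'' the only real obstacle is the measure-theoretic bookkeeping, namely making sure that $Y$ can be regarded as a bona fide random element of some function space so that the conditioning is well-defined and so that $g_Y(S_i)$ is jointly measurable. For standard spatial processes (Gaussian, or more generally any process admitting a measurable version on $\calD$), this is routine and can be stated as a regularity assumption rather than reproved. No special structure of $Y$---stationarity, Gaussianity, isotropy---is needed; exchangeability comes entirely from randomizing the locations, which is why this lemma sets up the ``random design'' branch of the paper while motivating the separate, harder analysis required in the fixed-location case of Theorem~\ref{thm:local.exchangeable}.
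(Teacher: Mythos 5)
Your proof is correct and follows essentially the same route as the paper's: condition on $Y$, observe that $Z_i = (S_i, Y(S_i))$ are conditionally iid, and conclude exchangeability (the paper cites the equivalence of conditional iid and exchangeability, while you directly integrate out $Y$, which is the easy direction and all that is needed). Your explicit flagging of the implicit independence of the $S_i$'s from $Y$ and of the measurability of $s \mapsto (s, Y(s))$ is a reasonable, if minor, addition.
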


Since randomly sampled spatial locations makes the data exchangeable, a validity property for GSCP follows immediately from the general theory in, e.g., \citet{shafer2008tutorial}.  

\begin{thm}
\label{thm:valid.global}
Let $(X,Y)$ be a stochastic process over $\calD$ and let $S_1,S_2,\ldots$ be iid draws in $\calD$.  Let $Z_i = (S_i, X(S_i), Y(S_i))$ for $i=1,2,\ldots$, and define the coverage probability function 
\[ c(\alpha, n, \prob) = \prob^{n+1}\{\Gamma^\alpha(Z^n; X_{n+1}, S_{n+1}) \ni Y(S_{n+1})\}. \]
Then the proposed GSCP is valid in the sense that
\begin{equation}
\label{e:mv}
c(\alpha, n, \prob) \geq 1-\alpha, \quad \text{for all $(\alpha,n,\prob)$}, 
\end{equation}
where $\prob^{n+1}$ is the joint distribution of $Z_1,\ldots,Z_n,Z_{n+1}$ under $\prob$.  Moreover, if $\delta_1,\ldots,\delta_{n+1}$ in \eqref{e:A_stat} have a continuous distribution, then 
\begin{equation}
\label{e:mv.new}
c(\alpha,n,\prob) \leq 1-\alpha + (n+1)^{-1}, \quad \text{for all $(\alpha,n,\prob)$}.
\end{equation}
\end{thm}

The upper bound in \eqref{e:mv.new}, which follows from the same arguments as in \citet{lei2018distribution}, implies that the GSCP method is not only valid but also efficient in the sense that the coverage probability is not too much larger than the nominal level.  That is, the coverage condition is not being achieved simply giving excessively wide intervals. Some further details on the efficiency of the global spatial conformal prediction procedure are investigated in  the Supplementary Materials.  
Note, also, that Theorem~\ref{thm:valid.global} makes no assumptions about the distribution of $(X,Y)$, so it surely covers non-Gaussian and non-stationary processes.  

Theorem~\ref{thm:valid.global} gives a marginal validity result in the sense that it accurately predicts the response $Y(S_{n+1})$ at $X(S_{n+1})$, for a randomly sampled spatial location $S_{n+1}$.  However, it does not ensure conditional validity, i.e., the case where $S_{n+1}=s^\star$ with $s^\star$ being a fixed spatial location.  There are negative results in the literature \citep[e.g.,][]{lei2014distribution} which state that strong conditional validity---for all $\prob$ and almost all targets $s^\star$---is impossible with conformal prediction.  Considerable effort has been expended recently trying to achieve ``approximate'' conditional validity in some sense; see, e.g., \citet{lei2014distribution}, \citet{conformal.shift}, \citet{barber2019limits}, and \citet{chernozukov.dconformal}.  Remarkably, there is at least one scenario in which a strong conditional validity result can be achieved in our context.  In particular, as we show in the Appendix, the GSCP-based intervals are both marginally and conditionally valid for the special case of an isotropic process sampled uniformly on a sphere.  
Admittedly, these  are rather strong conditions, so one would hope for (approximately) valid prediction under much less.  In Section~\ref{s:LSCP}, we show that asymptotically valid prediction intervals at a fixed location can be obtained under only mild conditions on the sampling scheme and the unknown process.

\section{Local spatial conformal prediction} 
 \label{s:LSCP}


\subsection{LSCP algorithm}
 \label{SS:local.algorithm}

For valid prediction at a fixed location $s^\star$, we propose a local spatial conformal prediction (LSCP) approach that is based on only those data points in the neighborhood of $s^\star$. Fix an integer $m > 0$ and select a neighborhood around $s^\star$ that contains $m$ many locations, $s_{i_j}$, for $j = 1,\ldots,m$.  Note that $\{s_{i_j}: j=1,\ldots,m\}$ is a subset of the full set of spatial locations $s_1,\ldots,s_n$.  Without structural assumptions about the response process, such as stationarity, the data at locations far from $s^\star$ are not obviously relevant to prediction at $s^\star$, so removing---or down-weighting (Section~\ref{SS:smoothed})---them from the local analysis is reasonable.  Plus, in applications where the infill asymptotic regime is appropriate, there are many observations nearby $s^\star$, so $m$ could be taken to be large. 

From here, we can proceed very much like in Section~\ref{s:GSCP}.  For notational simplicity, assume that indices $i=1,\ldots,m$ correspond to those $m$ spatial locations closest to $s^\star$.  Now let $Z_i = (s_i, X_i, Y_i)$, for $i=1,\ldots,m$, denote the observations at these $m$ closest locations to $s^\star$.  With a slight abuse of that notation, set $s_{m+1} = s^\star$ and $(x_{m+1}, y_{m+1})$ as the provisional values of $X$ and $Y$ at $s^\star$.  Then define the non-conformity scores exactly as before: 
 \[ \delta_i = \Delta(z_{(i)}^{m+1}, z_i), \quad i=1,\ldots,m+1. \]
 With this, we can readily obtain the plausibility contour function:
 \begin{equation} 
 \label{eq:LSCP:plausibility}
 p(y_{m+1} \mid z^m, s^\star, x_{m+1}) = \frac{1}{m+1} \sum_{i=1}^{m+1} 1\{\delta_i \geq \delta_{m+1}\}. 
 \end{equation} 
 Specific details are presented in Algorithm~2 in the Supplementary Materials. 
 The output of this algorithm is a $100(1-\alpha)$\% prediction interval for $Y_{m+1} = Y(s^\star)$, depending on $Z^m$ and the observed $X_{m+1}=X(s^\star)$, which we denote by $\Gamma_{s^\star}^\alpha(Z^m; X_{m+1})$.

 \subsection{Theoretical validity of LSCP}\label{SS:LSCP_theory}

 Our theoretical results hinge on a definition of local exchangeability.  Let $\calD \subset \RR^2$ be a compact spatial domain, e.g., $[0,1]^2$.  For a generic $\RR^d$-valued stochastic process $T$ defined on $\calD$, with $d \geq 1$, define the {\em localized version} of $T$, relative to a location $s^\star \in \calD$, as 
 \begin{equation}
 \label{eq:local}
 \widetilde T_r(u) = T(s^\star + r u), \quad u \in \calU = \{u \in \RR^2: \|u\| \leq 1\},
 \end{equation}
 indexed by the unit disk $\calU$ and the radius $r > 0$. Now suppose that $T$ can be decomposed as 
 \begin{equation}
 \label{eq:decomp}
 T(s) = \psi\bigl( L(s), E(s) \bigr), \quad s \in \calD, 
 \end{equation}
 where $L$ and $E$ are independent $\RR^d$-valued stochastic process, $L$ is a continuous spatial process, $E$ is a non-spatial process, and $\psi$ is a deterministic, continuous, $\RR^d$-valued function.  More specifically, suppose that $L$ and $E$, respectively, satisfy the following conditions:
 \begin{itemize}
 \item $L$ is {\em $L_2$-continuous} at $s^\star$ in the sense that its localized version $\widetilde L_r$ satisfies $\E\|\widetilde L_r(u) - \widetilde L_0(u)\|^2 \to 0$ as $r \to 0$ for any $u \in \calU$;  
 \item $E$ is {\em locally iid} at $s^\star$, that is, its localized version $\widetilde E_r$ converges in distribution to an iid process as $r \to 0$.    
 \end{itemize}
This formulation is too abstract to be useful, but formulating a general result here is appropriate.  Appendix~A.2 describes several common spatial models that satisfy \eqref{eq:decomp}, and generalizes the above formulation to the case where the covariates are also considered stochastic processes.

These assumptions yield a certain kind of {\em local exchangeability} which will be used below to show the LSCP algorithm achieves a desired validity property.  We first establish this local exchangeability result, which may be of independent interest.  

 \begin{prop}
 \label{thm:local.exchangeable}
 Suppose that $T$ can be decomposed as in \eqref{eq:decomp}, where $L$ is $L_2$-continuous at $s^\star$, $E$ is locally iid at $s^\star$, and $L$ and $E$ are independent. Then the localized process $\widetilde T_r$ in \eqref{eq:local} converges in distribution as $r \to 0$, and the limit is an exchangeable process in the sense that its finite-dimensional distributions are exchangeable.
\end{prop}

Using Proposition~\ref{thm:local.exchangeable}, we can establish the (asymptotically approximate) theoretical validity of the LSCP method.  To set the scene, those $m$ spatial locations closest to $s^\star$ fall in a neighborhood of some radius $r$. 
As is common in the spatial statistics literature \citep[e.g.,][]{stein1990comparison,cressie1992statistics}, we adopt an {\em infill asymptotic regime} in which the region $\calD$ remains fixed while the number of observations $n$ goes to infinity, hence filling the space.  The relevant point for our analysis is that under this regime the number of observations made in any neighborhood of $s^\star \in \calD$ will go to infinity.  Such a regime is natural---and necessary---in cases without structural assumptions about $Y$, e.g., stationarity, where it is simply not possible to learn the local features of a process at $s^\star$ if data are not concentrated in a neighborhood around $s^\star$.  Under the infill asymptotic framework , if $m$ is fixed and the number of locations $n$ is increasing to fill the bounded space $\calD$, then the radius of the neighborhood in which those $m$ points fall is vanishing.  For example, if the spatial locations are (roughly) uniformly distributed in $\calD$, then the number of points in a neighborhood of radius $r$ would be  proportional to $n r^2$; setting this equal to $m$ gives $r=r_n = (m/n)^{1/2} \to 0$ as $n \to \infty$.

 It follows from Proposition~\ref{thm:local.exchangeable} that the joint distribution of the response $Y$ at these $m$-many spatial locations around $s^\star$ (corresponding to $m$-many vectors in $\calU$) would be approximately exchangeable and, consequently, a conformal prediction algorithm that creates non-conformity scores using only these $m$ observations would be valid for predicting $Y(s^\star)$.  

 \begin{thm}
 \label{thm:valid.local}
 Consider an infill asymptotic regime with $n$ spatial locations in the bounded domain $\calD$, with $n \to \infty$.  Fix an integer $m > 0$ and let $r=r_n \to 0$ be such that the $m$ closest locations to $s^\star$ fall in a neighborhood of radius $r$.  Under the assumptions of Proposition~\ref{thm:local.exchangeable}, the non-conformity measure $\Delta$ is a continuous function of its inputs, and if the limiting distribution in Proposition~\ref{thm:local.exchangeable} is continuous, then the LSCP prediction intervals are asymptotically valid at $s^\star$ in the sense that 
 \[ \lim_{n \to \infty} \prob^{m+1}\{ \Gamma_{s^\star}^\alpha(Z^m; X_{m+1}) \ni Y_{m+1} \} = 1-\alpha + O(m^{-1}), \]
 where $\prob^{m+1}$ is the joint distribution of $Z_1,\ldots,Z_m,Z_{m+1}$ at the $m$ spatial locations and at $s^\star$.
 \end{thm}

 \subsection{The smoothed LSCP algorithm}
 \label{SS:smoothed}

 Theorem~\ref{thm:valid.local} implies that the local spatial conformal prediction with $m$ nearest neighbors is approximately valid under the infill asymptotic regime. However, in practice completely disregarding the contribution of the observations outside the $m$ nearest neighbors may be unsatisfactory, so we propose a smoothed version of the LSCP algorithm (sLSCP). 

The GSCP algorithm weights all $n+1$ non-conformity measures $\delta_i$ equally in the plausibility contour computation in \eqref{eq:plausibility}, but this is questionable for non-stationary processes with  stochastic properties that vary throughout the spatial domain.  To allow for non-stationarity, the sLSCP algorithm weights the non-conformity measures $\delta_i$ by how far the corresponding $s_i$ is from the prediction location. Let $f$ be a non-increasing function, and define weights 
 \[ w_i \propto f(d_i), \quad i=1,\ldots,n+1, \]
 where $d_i = \|s_i-s^\star\|$, $d_{n+1} \equiv 0$, and the proportionality constant ensures that $\sum_{i=1}^{n+1} w_i = 1$.  Different $f$ functions can be applied, but we recommend the normalized Gaussian kernel function with bandwidth $\eta$, 
\begin{equation}
 \label{eq:weight}
 w_{i} = \frac{\exp(-d_{i}^2/2\eta^2)}{1+\sum_{j=1}^{n}\exp(- d_{j}^2/2\eta^2)}, \quad i=1,\ldots,n+1. 
 \end{equation}
Note that, if $\eta \to \infty$, then $w_i \to (n+1)^{-1}$ for each $i$, which corresponds to the GSCP algorithm.  Finally, with these new weights, the plausibility contour at a provisional value $(s_{n+1}, x_{n+1}, y_{n+1})$ of $Y(s^\star)$ is given by
 \begin{equation}\label{eq:sLSCP:plausibility}
   p_w(y_{n+1} \mid Z_{(n+1)}^{n+1}, s_{n+1}, x_{n+1}) = \sum_{i=1}^{n+1} w_i 1\{\delta_i \geq \delta_{n+1}\},
 \end{equation}
As before, we recommend the Kriging-based strategy with $\delta_i$ the  standardized residual in \eqref{eq:residual}.

Since we are interested only in the local structure of $Y$, it is natural that locations far from $s^\star$ have negligible weight, as in \eqref{eq:weight}.  But including all $n$ observations requires some non-trivial calculations, e.g., inverting a large $n \times n$ covariance matrix. Therefore, to avoid cumbersome and ultimately irrelevant computation, we recommend using only the $M \ll n$ observations closest to $s^\star$ for both the Kriging predictions that determine $\delta_i$ and in the plausibility scores in \eqref{eq:sLSCP:plausibility}.  The resulting method is both locally adaptive and computationally efficient even for large data sets. 

The tuning parameter $\eta$ can be selected using cross validation, as illustrated in Sections \ref{s:sim} and \ref{s:app}.  The value of $M$ is determined by the bandwidth $\eta$ so that all observations with substantial $w_i$ are included, as are observations that are required for the Kriging prediction of these observations. Typically the number of nearby observations required to approximate the Kriging prediction is a small subset of the total number of observations \citep{stein2002screening}.  As a rule of thumb, $M$ could be selected to roughly include all observations within $2\eta + r^\star$ radius of $s^\star$, where $2\eta$ captures observations with substantial weights, and $r^\star$ is selected so that all the $M$ observations include the nearest 15 neighbors of the observation within $2\eta$ of $s^\star$. We summarize the details of Algorithm sLSCP in Algorithm~\ref{ag:lscp+}.  For simplicity, we use sLSCP and LSCP indistinguishably.

 \begin{algorithm}[t]
 \SetAlgoLined
 \KwIn{observations $z_i = (s_i, x_i, y_i), i=1, \ldots, n$; predict location $s^\star$; non-conformity measure $\Delta$; significance level $\alpha$; and a fine grid of candidate response values}
 \parameter{weight parameter $\eta \in (0, \infty)$; number of neighbors to consider $M \leq n$}
 \KwOut{$(1-\alpha)100\%$ prediction interval, $\Gamma^\alpha$, for $Y(s^\star)$}
 determine $M$ through $\eta$ if not given\;
 form $z_i, i = 1, \ldots, M,$ based on $M$ locations closest to $s^\star$\;
 $s_{M+1} \gets s^\star$\;
 calculate weights $w_i, i=1, \ldots, M+1$ as in \eqref{eq:weight}\;
 \For{$y_{M+1}$ in the specified grid}{
 \For{i = 1 to M+1}{
 define $z_{(i)}^{M+1}$ by removing $y_i$ from $z^{M+1}$\;
 $\delta_i \gets \Delta(z_{(i)}^{M+1}, y_i)$ \;}
 compute plausibility for $y_{M+1}$ as $p_w(y_{M+1} \mid \cdots)$ in \eqref{eq:sLSCP:plausibility}\;
 include $y_{M+1}$ in $\Gamma^\alpha$ if $p_w(y_{M+1} \mid \cdots) \geq t_M(\alpha)$\;}
 \Return $\Gamma^\alpha$.
 \caption{Smoothed local spatial conformal prediction (sLSCP).}\label{ag:lscp+}
 \end{algorithm}

It is important for our proposed methods to be computationally feasible for large datasets. Conformal prediction itself is relatively expensive since it requires fitting the underlying model once for each held-out data point being predicted.  In particular, the Kriging residuals and the associated non-conformity score computations require us to compute $\hat\mu_{n+1,i}(s_i, x_i)$ and $\hat\sigma_{n+1,i}(s_i, x_i)$ for each $i=1,\ldots,n$, which involves $n$ many evaluations of $(\widehat\beta, \widehat\Theta)$. To overcome this computational bottleneck, various adjustments have been considered in the literature.  One is the {\em split conformal prediction} strategy---also called {\em inductive conformal prediction} in \citet{vovk2005algorithmic}---which is common; see, e.g., \citet[][Sec.~2.2]{lei2018distribution}. The idea is to split the data into two parts: one for fitting the underlying model and the other for running conformal prediction with the fitted model from the first part fixed. The theoretical validity of split conformal prediction is now well-known, e.g., Section~3 of the Supplementary Materials. Alternatively, as is common in parametric Kriging, one could use the entire dataset to estimate $(\widehat\beta, \widehat\Theta)$ and then use the entire dataset again for prediction with the parameter estimates plugged in as if they were the ``true values.''  Given that the number of parameters in the working spatial model is relatively small, both approaches should perform well for moderate to large $n$. The simulation results presented in the Supplemental Materials suggest that this plug-in conformal is more efficient than split conformal in terms of width of the corresponding prediction intervals (or, more precisely, in terms of the interval score as defined in Section~\ref{s:sim:methods}), so the numerical results in Sections~\ref{s:sim} and \ref{s:app} below are based on plug-in versions of the proposed GSCP and LSCP algorithms.

\section{Simulation study}\label{s:sim}


\subsection{Data generation} \label{s:sim:data}

We consider one mean-zero Gaussian stationary process (Scenario~1) and seven non-Gaussian and/or nonstationary data-generating scenarios (Scenarios~2--8). Data are generated based on transformations of a latent Gaussian process $Z(s)$ and a white noise process $E(s)$ with standard normal distribution, where $s=(s_x,s_y) \in [0,1]^2$. The mean-zero stationary Gaussian process process $Z(s)$ has a $\Matern$ covariance function with variance $\sigma^2=3$, range $\phi=0.1$, and smoothness $\kappa = 0.7$. Data are sampled on the $N\times N$ grid of $n=N^2$ points in the unit square, $s\in \{N^{-1}, 2 N^{-1}, \ldots, 1\}^2$, with $N = 20$ or $N = 40$.  The scenarios are:
\begin{enumerate}[label=\arabic*, labelsep=0pt]
    \item \label{scen:stationary}. $Y(s) = Z(s) + E(s)$;
    \item \label{scen:cubic}. $Y(s) = Z(s)^3 + E(s)$;
    \item \label{scen:skewed}. $Y(s) = q[\Phi\{Z(s)/\sqrt{3}\}] + E(s)$ where $\Phi$ is the standard normal distribution function and $q$ is the $\text{Gamma}(1,3^{-1/2})$ quantile function;    
    \item \label{scen:prod}. $Y(s) = \sqrt{3} Z(s)  |E(s)|$;
    \item \label{scen:gfun}. $Y(s) = \mbox{sign}\{Z(s)\}  |Z(s)|^{s_x+1}+E(s)$;
    \item \label{scen:eastwest}. $Y(s) = \sqrt{\omega(s) / 3}\, Z(s) + \sqrt{1-\omega(s)} \, E(s)$ where $\omega(s)=\Phi(\frac{s_x-0.5}{0.1})$;
    \item \label{scen:nugget}. $Y(s) = Z(s) + s_x \, E(s)$; 
    \item \label{scen:spike}. $Y(s) = Z(s) + 10 \exp(-50\|s-c\|^2)$ where $c = (0.5,0.5)$; 
\end{enumerate}
Scenario~\ref{scen:stationary} is Gaussian and stationary, Scenarios~\ref{scen:cubic}--\ref{scen:prod} are stationary but non-Gaussian, and Scenarios~\ref{scen:gfun}--\ref{scen:spike} are  nonstationary either in the spatial variance (Scenarios~\ref{scen:gfun} and \ref{scen:eastwest}), error variance (Scenario~\ref{scen:nugget}), or mean (Scenario~\ref{scen:spike}). Scenario \ref{scen:skewed} generates skewed data to assess the method's performance when the symmetry of the base Kriging model is violated.  



 

\subsection{Prediction methods and metrics}\label{s:sim:methods}

For each dataset we apply the global and local (with $\eta=0.1$) conformal spatial prediction algorithms. For the parametric Kriging method and the initial Kriging predictions of our proposed conformal prediction, we estimate the spatial covariance parameters using empirical variogram methods \citep{cressie1992statistics}. The empirical variograms are calculated using the {\tt variog} function in the {\tt R} package {\tt geoR}, and the covariance parameters are chosen to minimize the weighted (by number of observations) squared error between the empirical and model-based variograms.  

We compare the proposed conformal prediction methods with standard global Kriging prediction and the local Kriging (laGP) method of \cite{gramacy2015local} that dynamically defines the support of a Gaussian process predictor based on a local subset of the data. For laGP, we use the function provided by the {\tt laGP} package in {\tt R} the local sequential design scheme starting from 6 points to 50 points through an empirical Bayes mean-square prediction error criterion.  

Methods are trained using a completely random set of 90\% of the observations and tested on the remaining 10\%.  Each scenario is repeated 100 times, and performance is evaluated using average coverage of $(1-\alpha)100$\% prediction intervals, average interval width, and average interval score \citep{gneiting2007strictly}, defined as  
$$S_\alpha(I;y^n) = \frac{1}{n}\sum_{i=1}^n\bigl\{ (I_u-I_l) + \tfrac{2}{\alpha} (I_l-y_i)_+ + \tfrac{2}{\alpha}(y_i-I_u)_+  \bigr\},$$ where $I=[I_l,I_u]$ is the $100(1-\alpha)\%$ prediction interval, $y^n$ contains the observations $y_1,\ldots,y_n$, and $z_+ = z \vee 0$ denotes the ``positive part.''  A smaller interval score is desirable as this rewards both high coverage and narrow intervals. We use $\alpha=0.1$ in this simulation study.  

\subsection{Results}
\label{s:sim:results}

We present results averaged over data sets and all spatial locations in Table~\ref{t:sim_stationary}. For the non-stationary scenarios varying across $s_x$ (Scenarios \ref{scen:gfun}--\ref{scen:nugget}), we present the results by the first spatial coordinate ($s_x$) averaged over the data sets and the second coordinate ($s_y$) in Figure~\ref{f:gfun_n_eastwest}, e.g., the value of coverage plotted at $s_x=N^{-1}$ is the average of the coverage over the $N$ points of the form $(N^{-1},s_y)$ for $s_y\in\{N^{-1}, 2 N^{-1}, \ldots, 1\}$.

\begin{table}
\caption{\label{t:sim_stationary}Performance comparison for simulation scenarios (``Scen'') without a covariate. The metrics are the empirical coverage of 90\% prediction intervals (``Cov90''), the width of prediction intervals (``Width'') and the interval score (``IntScore''), each averaged over location and dataset. The methods are global (GSCP) and local (LSCP) conformal prediction, stationary and Gaussian Kriging (``Kriging'') and local approximate Gaussian process (``laGP'') regression. }
\begin{center}
\begin{tabular}{ll|ccc|ccc}
 & & \multicolumn{3}{c|}{$N=20$} & \multicolumn{3}{c}{$N=40$}   \\
 Scen & Method & Cov90 & Width & IntScore & Cov90 & Width & IntScore \\ 
  \hline
\ref{scen:stationary}
& GSCP & 0.906 & 4.67 & 5.78    & 0.897 & 4.00 & 5.06 \\               
& LSCP & 0.890 & 4.57 & 5.95    & 0.891 & 3.99 & 5.12 \\               
& Kriging & 0.912 & 4.73 & 5.78 & 0.888 & 3.90 & 5.07 \\            
& LaGP & 0.877 & 4.78 & 6.50    & 0.879 & 4.27 & 5.63 \\     
&&&&&&&\vspace{-6pt}\\
\ref{scen:cubic}
& GSCP & 0.895 & 33.62 & 67.16 & 0.897 & 22.12 & 43.80 \\           
& LSCP & 0.896 & 31.05 & 58.37 & 0.910 & 21.74 & 36.47 \\           
& Kriging & 0.931 & 44.07 & 69.38 & 0.924 & 27.75 & 44.80 \\        
& LaGP & 0.913 & 40.57 & 69.38 & 0.928 & 31.28 & 47.24 \\           
&&&&&&&\vspace{-6pt}\\
\ref{scen:skewed}
& GSCP & 0.908 & 4.79 & 6.32 & 0.895 & 4.05 & 5.27 \\               
& LSCP & 0.893 & 4.65 & 6.27 & 0.893 & 4.03 & 5.24 \\               
& Kriging & 0.919 & 4.95 & 6.33 & 0.887 & 3.96 & 5.28 \\            
& LaGP & 0.883 & 4.92 & 6.83 & 0.880 & 4.29 & 5.77 \\               
&&&&&&&\vspace{-6pt}\\
\ref{scen:prod} 
& GSCP & 0.902 & 7.06 & 11.06 & 0.895 & 6.25 & 10.25 \\             
& LSCP & 0.892 & 6.84 & 11.23 & 0.895 & 6.08 &  9.74 \\             
& Kriging & 0.918 & 7.70 & 11.12 & 0.908 & 6.71 & 10.26 \\          
& LaGP & 0.898 & 7.40 & 11.83 & 0.901 & 6.68 & 10.51 \\             
&&&&&&&\vspace{-6pt}\\
\ref{scen:gfun} 
& GSCP & 0.900 & 6.51 &  9.40 & 0.898 & 5.03 & 7.11 \\              
& LSCP & 0.887 & 6.36 &  9.06 & 0.892 & 5.05 & 6.75 \\              
& Kriging & 0.924 & 7.18 &  9.52 & 0.897 & 5.11 & 7.15 \\           
& LaGP & 0.887 & 7.20 & 10.42 & 0.891 & 5.98 & 8.08 \\              
&&&&&&&\vspace{-6pt}\\
\ref{scen:eastwest} 
& GSCP & 0.894 & 2.78 & 3.69 & 0.896 & 2.63 & 3.60 \\               
& LSCP & 0.878 & 2.66 & 3.47 & 0.895 & 2.38 & 3.06 \\               
& Kriging & 0.897 & 2.78 & 3.69 & 0.888 & 2.54 & 3.61 \\            
& LaGP & 0.865 & 2.58 & 3.60 & 0.869 & 2.32 & 3.21 \\               
&&&&&&&\vspace{-6pt}\\
\ref{scen:nugget} 
& GSCP & 0.905 & 3.63 & 4.55 & 0.896 & 2.77 & 3.71 \\              
& LSCP & 0.888 & 3.53 & 4.59 & 0.896 & 2.70 & 3.46 \\               
& Kriging & 0.915 & 3.77 & 4.55 & 0.889 & 2.70 & 3.72 \\            
& LaGP & 0.869 & 3.92 & 5.31 & 0.881 & 3.17 & 4.14 \\               
&&&&&&&\vspace{-6pt}\\
\ref{scen:spike} 
& GSCP & 0.906 & 3.04 & 3.74 & 0.899 & 1.93 & 2.45 \\               
& LSCP & 0.880 & 3.00 & 3.91 & 0.895 & 1.92 & 2.46 \\               
& Kriging & 0.928 & 3.25 & 3.78 & 0.915 & 2.05 & 2.48 \\            
& LaGP & 0.863 & 3.39 & 4.64 & 0.871 & 2.41 & 3.20 \\  
\end{tabular}
\end{center}
\end{table}

In Scenario~\ref{scen:stationary}, the Gaussian and stationary process, the performance of GSCP, LSCP, and Kriging are comparable (Table~\ref{t:sim_stationary}). Kriging performs well in this case since the data generating mechanism aligns with its underlying assumption, but the conformal methods are competitive with the parametric model in terms of both coverage and interval width. In Scenarios~\ref{scen:cubic}, \ref{scen:skewed}, and \ref{scen:prod}, the non-Gaussian but stationary processes, GSCP, LSCP, and Kriging perform more or less the same in terms of interval score and outperform laGP. However, the coverage of the conformal methods, especially the GSCP algorithm, is closer to the nominal level than Kriging and the Kriging intervals are generally wider than the conformal intervals.

Figure \ref{f:gfun_n_eastwest} shows the results for nonstationary Scenarios \ref{scen:gfun} and \ref{scen:eastwest} when $N=40$. 
LSCP performs the best among the four methods for these nonstationary scenarios. For Scenario~\ref{scen:gfun}, the process is Gaussian to the west and non-Gaussian to the east. The global prediction methods GSCP and Kriging generate prediction intervals with similar width for all $s_x$ (ignoring edge effects), while LSCP and laGP provide wider intervals on the east ($s_x$ near 1) than the west ($s_x$ near 0). LSCP has coverage around 90\% for all $s_x$, and the lowest interval scores, especially in the east where the process is more non-Gaussian.  Similarly, in Scenario~\ref{scen:eastwest}, the correlation is stronger in the east than the west, and the LSCP performs the best by providing adaptive prediction interval width and valid coverage across space. 

\begin{figure}
\centering 
\includegraphics[width=0.4\textwidth,page=1]{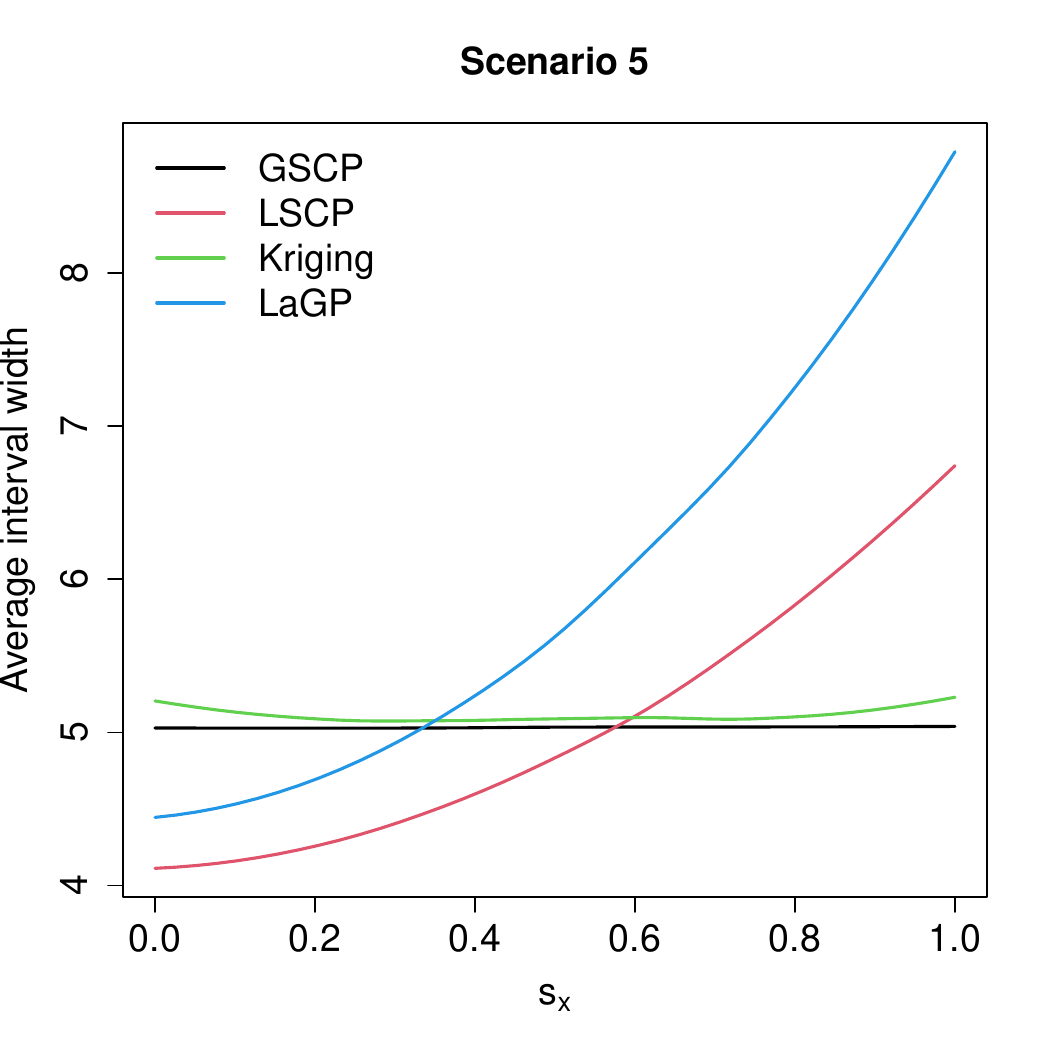}
\includegraphics[width=0.4\textwidth,page=4]{figs/east_west_revision.pdf}
\includegraphics[width=0.4\textwidth,page=2]{figs/east_west_revision.pdf}
\includegraphics[width=0.4\textwidth,page=5]{figs/east_west_revision.pdf}
\includegraphics[width=0.4\textwidth,page=3]{figs/east_west_revision.pdf}
\includegraphics[width=0.4\textwidth,page=6]{figs/east_west_revision.pdf}
\caption{Performance comparison by $s_x$ for Scenario \ref{scen:gfun}: $Y(s) = \mbox{sign}\{Z(s)\} \cdot |Z(s)|^{s_x+1}+E(s)$ and Scenario \ref{scen:eastwest}: $Y(s) = \sqrt{\omega(s)/3}\cdot Z(s) + \sqrt{1-\omega(s)}\cdot E(s)$ where $\omega(s)=\Phi(\frac{s_x-0.5}{0.1})$ when $N=40$ (results are smoothed over $s_x$ for clarity). }
\label{f:gfun_n_eastwest}
\end{figure}

We also conducted a simulation study when spatial locations are sampled uniformly on $[0,1]^2$. The performance is very similar to that when locations are fixed at equally-spaced grid points, so we only show the latter in the paper. Additional results for the scenarios with covariates (thus a comparison with universal Kriging) and a sensitivity analysis confirming our method's robustness to the estimates of the spatial covariance parameters are included in the Supplemental Materials.


\section{Real data analysis}\label{s:app}

This section demonstrates the performance of conformal prediction method using the canopy height data in Figure~\ref{f:canopy}a. The data were originally presented in \cite{cook2013nasa} and were analyzed using a nearest-neighbor Gaussian process model in \cite{datta2016hierarchical}. The data are available in the R package {\tt spNNGP} \citep{finley2017spnngp}. There are $n=1,723,137$ observations and clear nonstationarity and non-normality.  For example, there are several heterogeneous areas with small height canopies around the location with longitude and latitude being 729,000 and 470,000, respectively. 


We compare methods using 90\% prediction intervals for 10,000 test locations chosen randomly from the full dataset.  Since the data clearly exhibit nonstationarity we do not apply GSCP.  We select the kernel function and bandwidth parameter using cross-validation over the validation locations. The average interval score is consistently smaller for the Gaussian kernel (ranging between 4.3 and 4.4 by $\eta$) than the uniform kernel (ranging between 4.9 and 5.2 by $\eta$) and minimized by the Gaussian kernel with $\eta=6\times10^{-4}$.
Table~\ref{tab:canopy_perf_comp} compares the performance of LSCP on the 10,000 test locations with Kriging and laGP.  LSCP outperforms the other methods as the empirical coverage of LSCP is the closet to the desired 90\% and the LSCP minimizes the interval score.

\begin{table}
\caption{\label{tab:canopy_perf_comp}Performance comparison for the canopy height data. The metrics are the width, coverage (``Cov90'') and interval score (``IntScore'') of 90\% prediction intervals, each averaged over 10,000 randomly chosen test locations. The methods are local  conformal prediction (LSCP), stationary and Gaussian Kriging (``Kriging'') and local approximate Gaussian process (``laGP'') regression.}
    \centering
    \begin{tabular}{l|ccc}
             & Width & Cov90   & IntScore\\
     \hline
     LSCP    & 2.87  & 87.9\%  & 4.33 \\
     Kriging & 5.44  & 96.6\%  & 6.81 \\
     laGP    & 5.04  & 91.1\%  & 6.63 \\
    \end{tabular}
\end{table}

Figure \ref{f:canopy} plots the interval widths for each method. Unlike Kriging, the LSCP and laGP interval widths are locally adaptive with wider intervals in heterogeneous areas and more narrow intervals in homogeneous areas. Comparing LSCP and laGP, LSCP generally provides narrower intervals than laGP, which means the proposed method is more efficient than laGP. In addition, the locations of the observations that fall outside the prediction intervals are uniformly distributed for LSCP and laGP, but clustered in the heterogeneous areas for Kriging.  In short, the proposed spatial conformal prediction algorithm shows its superiority in this real data analysis. 

\begin{figure}
(a) Original data
\hspace{100pt}(b) LSCP result\\
\includegraphics[width=0.49\linewidth]{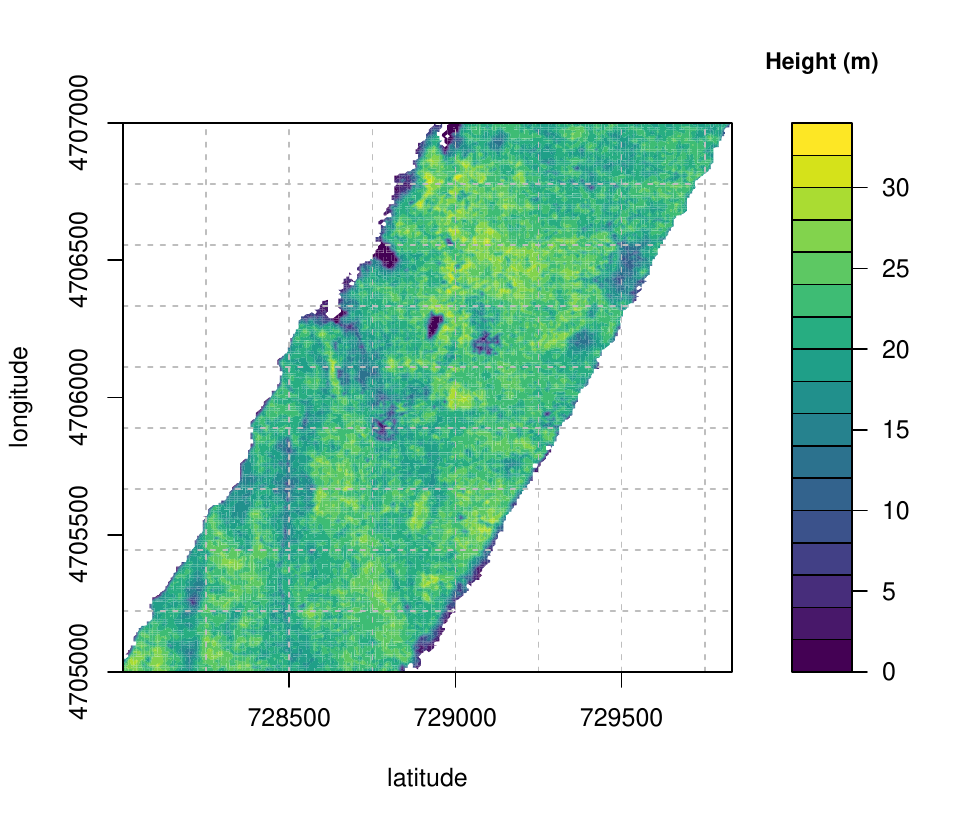}
\includegraphics[width=0.49\linewidth]{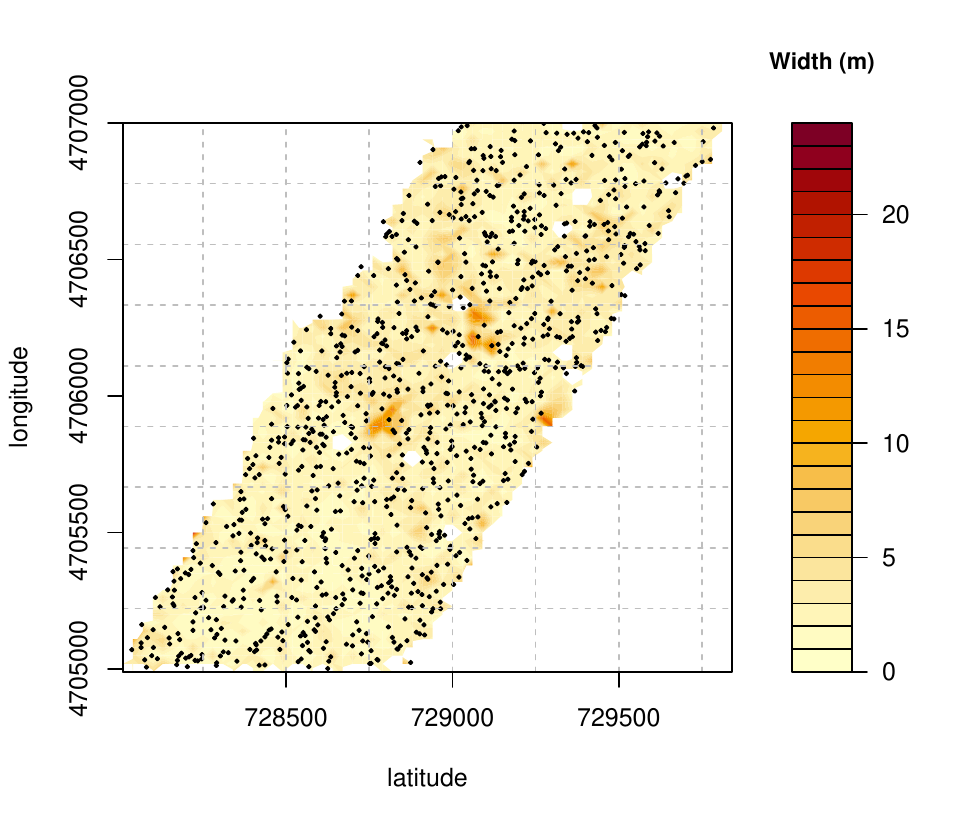}\\
(c) Kriging result
\hspace{100pt}(d) laGP result\\
\includegraphics[width=0.49\linewidth]{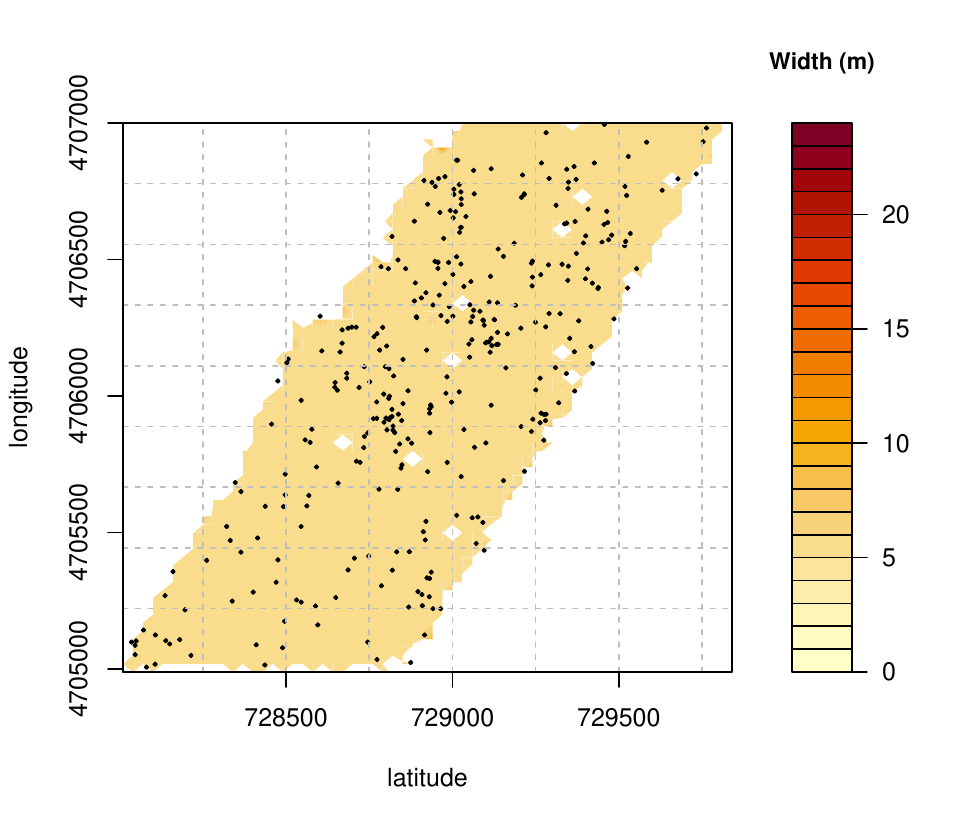} 
\includegraphics[width=0.49\linewidth]{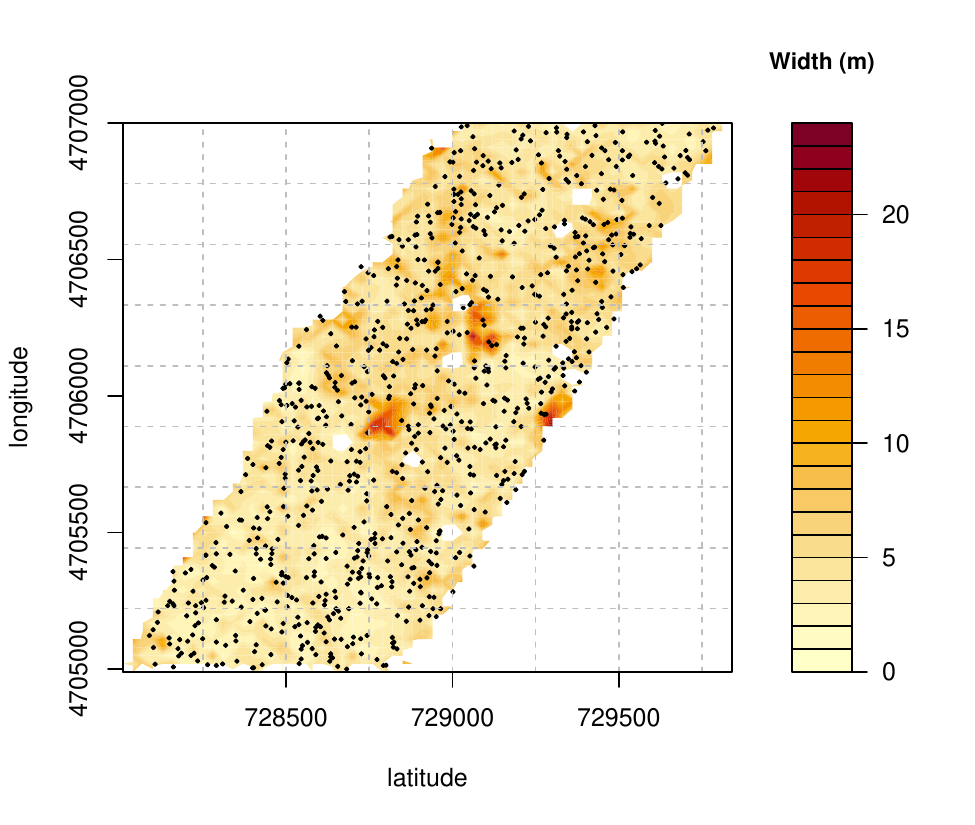} 
\caption{(a) Heatmap of the canopy height data; (b)--(d) Prediction interval width (color) and locations not covered (points) for LSCP, Kriging, and laGP. Longitude and latitude are in UTM Zone 18. }
\label{f:canopy}
\end{figure}

\section{Discussion}\label{s:discussion}

In this paper, we proposed a spatial conformal prediction algorithm to provide valid, robust, and model-free prediction intervals by combining spatial methods and the classical conformal prediction. We provided both global and local versions to accommodate different stationarity cases and sampling designs. We proved their validity under various sampling designs and data-generating mechanisms. To the authors' knowledge, this work is among the first in making the classical conformal algorithm work for non-exchangeable data. Our simulation studies and real data analyses demonstrate the advantage of the proposed spatial conformal prediction algorithms. We also developed an {\tt R} package entitled {\tt scp} (\url{https://github.com/mhuiying/scp}) to compute the plausibility contours and generate spatial prediction intervals using either Kriging residual or any other user-defined non-conformity measure.

An attractive feature of the proposed algorithms is that they are model-free in the sense that their theoretical validity does not depend on correct specification of a model.  In our implementations we use the squared residuals from a simple parametric model to define the non-conformity terms. However, as anticipated by our theoretical results, our simulation study shows that the methods work well even if the parametric family or the mean and covariance functions are misspecified, and that the results are insensitive to inaccurate estimation of the parameters in the parametric model. This robustness allows the methods to be applied broadly and with confidence.

The local conformal prediction method relies on a dense grid of points around each prediction location and thus a large dataset.  Computation time often prohibits application of spatial methods to large datasets.  Fortunately, we are able to apply our method to large datasets by exploiting local algorithms and an explicit formula for the plausibility contour using the Kriging-based non-conformity score.  Similar derivations are needed for prediction procedures other than Kriging in order to maintain computational efficiency.

Future research directions include extending the work to spatial processes with discrete observations, e.g., when $Y(s)$ is binary or a count.  Generalized spatial linear models are compatible with our current framework (see the Appendix), but continuity in the distribution function is required.  Therefore, further studies would be required to establish the validity of conformal prediction for discrete data. Another limitation of the proposed algorithms is that they only produce intervals for a single location.  Generalizing the algorithms to produce joint intervals for multiple locations would be useful in some applications.  One option is to use a Bonferroni correction; of course, this may be inefficient for many simultaneous predictions, but greater efficiency would require model assumptions to link the multiple locations and facilitate information sharing. It would also be of interest to extend the proposed spatial conformal prediction methods to spatiotemporal data, perhaps building on recent work for time series data \citep{xu2020conformal,
zaffran2022adaptive}.


\section*{Appendix}

\subsection*{A.1: Conditional validity on a sphere}

An obstacle that prevents a conditional validity result in the existing literature is an ``edge effect.''  That is, conditional validity is typically achieved at targets in the middle of the domain, but fails at targets in the extremes; see Figure~1(b) in \citet{lei2014distribution}.  So if it were possible to eliminate the edge effect---even if in a trivial way, by eliminating the edge itself---then there is hope for establishing a conditional validity result.  In our spatial context, but perhaps not in other cases, it may not be unreasonable to assume that the spatial locations are sampled iid from a uniform distribution on a sphere.  Since the sphere has no edges and a uniform distribution has no extremes, there is no ``edge effect'' preventing conditional validity.  Some additional structure in the $(X,Y)$ process is also needed here, in particular, it should be isotropic in the sense that the correlation structure only depends on the distance between spatial locations.  Note that, if the mean of the parametric base model is correctly specified, so that the conditional distribution of $Y-X\beta$, given $X$, is free of $X$, then the stationarity assumption about $X$ can be removed. 

To our knowledge, Proposition~\ref{prop:sphere} below gives the first finite-sample conditional validity result for conformal prediction in the literature, albeit under rather strong conditions.  

\begin{prop}
\label{prop:sphere}
Let $(X,Y)$ be an isotropic stationary process over the sphere $\calD = \{s \in \RR^3: \|s\|=1\}$, and suppose that the locations $S_1,\ldots,S_n, S_{n+1}$ are independent and uniformly distributed on $\calD$. For $\Gamma^\alpha$ as described above, define the conditional coverage probability function 
\[ c(s^\star \mid \alpha, n, \prob) = \prob^{n+1}\{\Gamma^\alpha(Z^n; S_{n+1}, X_{n+1}) \ni Y_{n+1} \mid S_{n+1} = s^\star\}. \]
Then the GSCP-based predictions are conditionally valid, i.e., $c(s^\star \mid \alpha, n, \prob) \geq 1-\alpha$ for all $(\alpha, n, s^\star)$ and all $\prob$ under which $(X,Y)$ is stationary and isotropic and $S$ are iid uniform on $\calD$. 
\end{prop}


\subsection*{A.2: Locally-exchangeable processes}

To better understand how the locally-exchangeable processes in Section~\ref{SS:LSCP_theory} relate to our prediction problem, consider a simple case with no covariates, where $\{Y(s): s \in \calD\}$ is the only random process under consideration.  In that case, we want to show that $T(s) = Y(s)$ has this local exchangeability property.  Then the sufficient condition \eqref{eq:decomp} above amounts to assuming there exists a suitable real-valued function $\psi_Y$, along with appropriate processes $L_Y$ and $E_Y$, such that 
\[ Y(s) = \psi_Y\bigl( L_Y(s), E_Y(s) \bigr), \quad s \in \calD. \]
There are a number of common models for continuous responses that meet this condition, including the additive model in Section~\ref{s:back:spatial}, certain generalized spatial linear models \cite{diggle1998model}, spatial copula models \citep{krupskii2019copula}, and max-stable processes \citep{reich2012hierarchical}.  For example, in a generalized spatial linear model, with suitable spatial process $L_Y$ and Gaussian white noise $E_Y$, take
\begin{equation} 
\label{eq:glm}
\psi_Y(\ell,e) = H_{g(\ell)}^{-1}\bigl( \Phi^{-1}(e) \bigr),  
\end{equation} 
where $H_\xi$ is the distribution function for an exponential family with natural parameter $\xi$, $g$ is the link function, and $\Phi$ is the standard normal distribution function.  Of course, to meet the continuity requirement, the exponential family must have a density with respect to Lebesgue measure.  


For the practically relevant case with both a response $Y$ and covariate $X$ process, the idea is similar but the notation is more complicated.  The goal is to find conditions under which the joint process $T(s) = (X(s),Y(s))$ has a representation as in \eqref{eq:decomp}.  Admittedly, it is challenging to consider the joint process directly, which is why we aim to give a simpler sufficient condition based on the marginal distribution of $X$ and the conditional distribution of $Y$, given $X$.  Consider the following decomposition:
\begin{equation}
\label{eq:covariate.decomp}
\begin{split}
X(s) & = \psi_X\bigl( L_X(s) , E_X(s) \bigr) \\
Y(s) & = \psi_Y\bigl( L_Y(s) , E_Y(s) \mid X(s) \bigr).
\end{split}
\end{equation}
Roughly, this amounts to assuming that each of $X$ and $Y$ has a decomposition like that described above for $Y$ alone.  Individual assessments of the distributional properties of $X$ and $Y$ are more manageable than directly considering their joint distribution.  And as the following simple lemma states, separate considerations of its marginal and conditional structure suffice to establish a decomposition of the joint structure.

\begin{lem}
\label{lem:decomp}
If $(X,Y)$ can be decomposed as in \eqref{eq:covariate.decomp}, if the pairs of processes $(L_X,L_Y)$ and $(E_X,E_Y)$ in are individually $L_2$-continuous and locally iid, respectively, and if $(\ell, e) \mapsto \psi_X(\ell, e)$ and $(\ell, e, x) \mapsto \psi_Y(\ell, e \mid x)$ are both continuous, then $T(s) = (X(s), Y(s))$ satisfies the conditions of Proposition~\ref{thm:local.exchangeable}. 
\end{lem}


As we discussed above, the decomposition of the $X$ marginal as in \eqref{eq:covariate.decomp} is quite flexible.  For example, it is quite common that $X$ could be expressed as $X=L_X + E_X$ for a spatial and non-spatial components, $L_X$ and $E_X$, respectively, but the additive form is not necessary.  And just like in our discussion above Proposition~\ref{prop:sphere} above, if the mean of the parametric base model is correctly specified, then these assumptions about $X$ here can be dropped. Similarly, if the decomposition of $Y$ in the response-only model was flexible, then the corresponding conditional decomposition in \eqref{eq:covariate.decomp} must be equally flexible.  For example, the same generalized spatial linear model can be considered, but now it is allowed to depend smoothly on the covariate.

\section*{Supplementary materials}

\subsection*{Computational details}\label{s:A1}

We describe the computational algorithm for the case of no covariates.  If there are covariates, the algorithm below is applied to the residuals and the covariate effects are added back to the resulting prediction interval.  We first give prediction equation for the base Kriging model. To make predictive inference at the new location $s_{n+1}$, we study the random variable $Y_{n+1} = Y(s_{n+1})$ together with the other $Y_i$'s.  The full data set 
is denoted by $Z^{n+1} = \{Z_1,...,Z_n, Z_{n+1}\}$.  The collection of data excluding $Z_i$ is denoted by $Z_{(i)}^{n+1}$.  The joint distribution of $Y^{n+1}=(Y_1,...,Y_n, Y_{n+1})$, given $(s_1,...,s_n,s_{n+1})$, the $(n+1)\times d$ covariate matrix $X^{n+1}$ and the spatial covariance parameters $\Theta=\{\sigma^2,\tau^2,\phi, \kappa\}$, is 
$ Y^{n+1} \sim \mbox{Normal}\big(X^{n+1}\beta,\Sigma(\Theta)\big)$,
where $\Sigma(\Theta)$ is a $(n+1)\times(n+1)$ covariance matrix with $(i,j)$ element 
$ \sigma^2\rho(d_{ij}; \phi, \kappa) + I(i=j)\tau^2$.
Let ${\widehat \beta}$ and ${\widehat \Theta}$ be estimates of $\beta$ and $\Theta$, respectively, and ${\widehat Q} = \Sigma({\widehat \Theta})^{-1} = \{{\hat q}_{ij}\}$.  The Kriging prediction and variance of $Y_i$, given $\beta={\widehat \beta}$, $\Theta  = {\widehat \Theta}$ and $Z_{(i)}^{n+1}$, are
\begin{equation}
\label{eq:kriging_supp}
\begin{split}
\hat\mu_{n+1,i}(s_i, x_i) & = \E(Y_i \mid Z_{(i)}^{n+1}, s_i, x_i) 
 = x_i^\top{\widehat \beta} -\hat q_{ii}^{-1} \sum_{j\ne i}{\hat q}_{ij}(y_j- x_j^\top{\widehat \beta})\\
\hat \sigma_{n+1,i}^2(s_i, x_i) & = \var(Y_i \mid Z_{(i)}^{n+1}, s_i, x_i) = {\hat q}_{ii}^{-1}, 
\end{split}
\end{equation}
for each $i=1,\ldots,n,n+1$.  

Applying the predictive distributions in \eqref{eq:kriging_supp} and provisionally setting $Y_{n+1}=y$, for $i\ne n+1$ we can write
\begin{eqnarray}\label{e:quad}
\delta_i-\delta_{n+1} &=& q_{ii}\left(Y_i-{\tilde Y}_{i,n+1}+\frac{q_{i,n+1}}{q_{ii}}y\right)^2 - q_{n+1,n+1}\left(y-{\hat Y}_{n+1}\right)^2 \\
&=& U_{i} + V_{i}y+W_{i}y^2\nonumber
\end{eqnarray}
where ${\tilde Y}_{i,n+1} = -\sum_{j\ne\{i,n+1\}}q_{ij}Y_j/q_{ii}$, $U_{i} = q_{ii}(Y_i-{\tilde Y}_{i,n+1})^2-q_{n+1,n+1}{\hat Y}_n^2$, $V_{i} = 2q_{i,n+1}(Y_i-{\tilde Y}_{i,n+1}) + 2q_{n+1,n+1}{\hat Y}_{n+1}$ and $W_{i} = q_{i,n+1}^2/q_{ii} -q_{n+1,n+1}$. 

Since the inverse covariance matrix $Q$ is positive definite, $q_{ii}>0$ and $q_{ii}>\max_{j\ne i}|q_{ij}|$, so $W_{i} <0$ and $V_i^2-4U_iW_i = \frac{4q_{n+1,n+1}}{q_{ii}}[\sum_{i=1}^n q_{ij}Y_j+q_{i,n+1}\hat Y_{n+1}]^2\geq0$. Therefore, the $y$ satisfying $\delta_i-\delta_{n+1}\geq0$ are within two roots, denoted as $a_i \leq b_i$, of the quadratic equation $U_{i} + V_{i}y+W_{i}y^2=0$. Then the plausibility calculation \eqref{eq:GSCP:plausibility} becomes \begin{eqnarray*}
p(y|Z_{(n+1)}^{n+1})&=&\frac{1}{n+1}\sum_{i=1}^{n}1\{U_{i} + V_{i}y+W_{i}y^2\geq0\}+\frac{1}{n+1}\\ &=& \frac{1}{n+1}\sum_{i=1}^{n}1\{a_i \leq y \leq b_i\}+\frac{1}{n+1},\end{eqnarray*} 
which is a step function with jumping points being $a_i$'s and $b_i$'s, and the plausibility function for sLSCP in \eqref{eq:sLSCP:plausibility} simply becomes 
$\sum_{i=1}^{n}w_i1\{a_i \leq y \leq b_i\}$. With the help of the step function, we can solve for the prediction interval directly with no need to enumerate for possible $y$ satisfying $p(y|Z_{(n+1)}^{n+1})\geq t_n(\alpha)$. The GSCP and LSCP steps are summarized in Algorithms \ref{ag:gscp} and \ref{ag:lscp}, and smoothed LSPC algorithm is given in the main text.

\begin{algorithm}[t]
\SetAlgoLined
\KwIn{observations $z_i = (s_i, x_i, y_i)$, $i=1,\ldots,n$; prediction location and covariate $(s_{n+1}, x_{n+1})$; non-conformity measure $\Delta$; significance level $\alpha$; and a fine grid of candidate $y_{n+1}$ values}
\KwOut{$100(1-\alpha)\%$ prediction interval, $\Gamma^\alpha = \Gamma^\alpha(z^n; s_{n+1}, x_{n+1})$, for the response $Y_{n+1} = Y(s_{n+1})$ at $x_{n+1} = X(s_{n+1})$.}

\For{provisional values $y_{n+1}$ in the specified grid}{
\For{$i=1,\ldots,n+1$}{ 
$z_{(i)}^{n+1} \gets z^{n+1} \setminus \{z_i\}$\; 
$\delta_i \gets \Delta(z_{(i)}^{n+1}, z_i)$\; 
}
compute plausibility for $y_{n+1}$ as $p(y_{n+1} \mid z^n, s_{n+1}, x_{n+1})$ in \eqref{eq:GSCP:plausibility}\;
include $y_{n+1}$ in $\Gamma^\alpha$ if $p(y_{n+1} \mid z^n, s_{n+1}, x_{n+1}) \geq t_n(\alpha)$\;
}
\Return $\Gamma^\alpha$.

\caption{Global spatial conformal prediction (GSCP).} \label{ag:gscp}
\end{algorithm}

\begin{algorithm}[t]
\SetAlgoLined
\KwIn{observations $z_i = (s_i, x_i, y_i), i=1,\ldots,n$; prediction location and covariate $(s^\star, x_{m+1})$; non-conformity measure $\Delta$; significance level $\alpha$; and a fine grid of candidate $y^\star$ values}
\parameter{number of neighbors to consider $m \leq n$}
\KwOut{$100(1-\alpha)\%$ prediction interval, $\Gamma_{s^\star}^\alpha = \Gamma_{s^\star}^\alpha(z^m; x_{m+1})$, for $Y(s^\star)$ with $X(s^\star) = x_{m+1}$: }

form $z_i, i = 1, \ldots, m$, based on $m$ locations closest to $s^\star$\; 
$s_{m+1} \gets s^\star$\;
\For{provisional values $y_{m+1}$ in the specified grid}{
\For{$i = 1$ to $m+1$}{
$z^{m+1}_{(i)} \gets z^{m+1} \setminus \{z_i\}$\;
$\delta_i \gets \Delta(z^{m+1}_{(i)}, z_i)$\;}
compute plausibility $p(y_{m+1} \mid z^m, s^\star, x_{m+1})$ in \eqref{eq:LSCP:plausibility}\;
include $y_{m+1}$ in $\Gamma_{s^\star}^\alpha$ if $p(y_{m+1} \mid z^m, s^\star, x_{m+1}) \geq t_m(\alpha)$\;}
\Return $\Gamma_{s^\star}^\alpha$.

\caption{Local conformal spatial prediction (LSCP).}\label{ag:lscp}
\end{algorithm}

\subsection*{Proofs from Sections~\ref{s:GSCP}--\ref{s:LSCP}}

\subsubsection*{Proof of Proposition 1}
\label{proof:local.exchangeable}

Write the localized version of the $T$ process as 
\[ \widetilde T_r(u) = \psi \bigl( \widetilde L_r(u), \widetilde E_r(u) \bigr), \quad u \in \calU, \]
where $\psi$ is a continuous function of two arguments and the $L$ and $E$ components are $L_2$-continuous and locally iid, respectively.  We are interested in the finite-dimensional distribution of the localized process, so take a fixed set of $m$ vectors $u_1,\ldots,u_m$ in $\calU$.  Define the vectors 
\[ \widetilde L_r^m = \bigl( \widetilde L_r(u_1),\ldots, \widetilde L_r(u_m) \bigr) \quad \text{and} \quad \widetilde E_r^m = \bigl( \widetilde E_r(u_1), \ldots, \widetilde E_r(u_m) \bigr). \]
First, a bit of notation.  If $w$ is a generic object, let $w^{\otimes m}$ denote a copy of $m$ versions of $w$ in in rows.  For example, if $w$ is a scalar, then $w^{\otimes m} = w 1_m$, where $1_m$ is an $m$-vector of unity.  Alternatively, if $w$ is a (column) vector, then $w^{\otimes m}$ is a matrix with $m$ identical rows, each containing $w^\top$.  Also, let $\|\cdot\|$ denote either the usual Euclidean norm on $\RR^m$ or the Frobenius norm on matrices with $m$ rows, depending on the dimension of its argument.  

For the $L$ part in the above representation, Markov's inequality implies
\[ \prob\{\|\widetilde L_r^m - L(s^\star)^{\otimes m}\| > \varepsilon\} \leq \varepsilon^{-2} \, \E\|\widetilde L_r^M - L(s^\star)^{\otimes m}\|^2, \quad \text{for any $\varepsilon > 0$}. \]
The expectation in the upper bound can be rewritten as 
\begin{align*}
\E\|\widetilde L_r^m - L(s^\star)^{\otimes m}\|^2 & = \sum_{i=1}^m \E\|\widetilde L_r(u_i) - L(s^\star)\|^2 \\
& = \sum_{i=1}^m \E\|L(s^\star + r u_i) - L(s^\star)\|^2, 
\end{align*}
and, since $m$ is fixed, the assumed $L_2$-continuity of the $L$ process implies that the right-hand side vanishes as $r \to 0$.  Therefore, we have that $\widetilde L_r^m \to L(s^\star)^{\otimes m}$ in probability and, hence, in distribution.  

For the $E$ part in the above representation, locally iid implies 
\[ \widetilde E_r^m \to \widetilde E_0^m, \quad \text{in distribution, as $r \to 0$} \]
where $\widetilde E_0^m$ is an iid vector.  This and the continuous mapping theorem gives 
\[ \widetilde T_r^m \to \psi\bigl( L(s^\star)^{\otimes m}, \widetilde E_0^m \bigr) \quad \text{in distribution, as $r \to 0$}, \]
where $\psi$ is being applied component-wise.  The right-hand side is clearly conditionally iid, given $L(s^\star)$, which implies exchangeability.  Finally, since this holds for all $m$ and for all $(u_1,\ldots,u_m)$, we know that the process $\widetilde T_r$ has a distributional limit, and that limit is an exchangeable process.

\subsubsection*{Proof of Proposition 2} 
\label{proof:sphere}

Let $\rho$ be a generic $3 \times 3$ rotation matrix that takes a location $s$ on the sphere $\calD$ to a new location $\rho s$, also on the sphere.  Write $\rho \prob$ for the distribution of the spatial process post-rotation by $\rho$, $(\rho S, X(\rho S), Y(\rho S))$.  From our stated distributional assumptions, namely, that $S$ is uniform on $\calD$ and $(X,Y)$ is isotropic and stationary, it follows that the joint distribution is invariant to rotations of the sphere, i.e., $\rho \prob = \prob$.  Consequently, the conditional coverage probability function satisfies 
\begin{equation}
\label{eq:cvg1}
c(\rho s^\star \mid \alpha, n, \rho \prob) = c(\rho s^\star \mid \alpha, n, \prob). 
\end{equation}
Next, by the Kriging-based construction of the conformal prediction interval, it is similarly easy to see that the coverage event is invariant to rotations too.  That is, if $\rho Z^n$ is the data after rotating the spatial locations by $\rho$, then 
\begin{align*}
\Gamma^\alpha(\rho Z^n; & \, \rho S_{n+1}, X(\rho S_{n+1})) \ni Y(\rho S_{n+1}) \\
& \iff \Gamma^\alpha(Z^n; S_{n+1}, X(S_{n+1})) \ni Y(S_{n+1}). 
\end{align*}
From here, it follows that the conditional coverage probability function satisfies
\begin{equation}
\label{eq:cvg2}
c(\rho s^\star \mid \alpha, n, \rho \prob) = c(s^\star \mid \alpha, n, \prob). 
\end{equation}
Putting together the equalities in \eqref{eq:cvg1} and \eqref{eq:cvg2}, we conclude that $s^\star \mapsto c(s^\star \mid \alpha, n, \prob)$ is constant.  But the result in Theorem~\ref{thm:valid.global} applies in present case; therefore, if the marginal coverage probability exceeds $1-\alpha$, then the constant conditional coverage probability must exceed $1-\alpha$ too, which completes the proof.

\subsubsection*{Proof of Theorem~\ref{thm:valid.local}}
\label{proof:valid.local}

Let $T(s) = (X(s), Y(s))$ be the joint response-covariate process.  We have $m$ spatial locations in a neighborhood of $s^\star$, which can be expressed as $s_i = s^\star + r u_i$, $i=1,\ldots,m$, where $u_i \in \calU$ and $r$ is the neighborhood's radius.  Set $s_{m+1} = s^\star$ and define $X_i = X(s_i)$ and $Y_i=Y(s_i)$, for $i=1,\ldots,m,m+1$.  Let $\widetilde T_r^{m+1}$ denote the collection of all $m+1$ triples $(s_i, X_i, Y_i)$, including the $(m+1)^\text{st}$ entry that corresponds to the values at $s^\star$.  The non-conformity scores, $\delta_1,\ldots,\delta_{m+1}$, are functions of $\widetilde T_r^{m+1}$, and we will collect these into an $(m+1)$-vector $\delta_r^{m+1}$, whose dependence on the radius $r$ is now being made explicit.  Then 
\[ \Gamma_{s^\star}^\alpha(Z^m; X_{m+1}) \ni Y_{m+1} \iff \rank(m+1; \delta_r^{m+1}) > \lfloor (m+1) \alpha \rfloor, \]
where $\rank(m+1; \delta_r^{m+1})$ denotes the rank of the $(m+1)^\text{st}$ entry within the collection $\delta_r^{m+1}$.  By Proposition~\ref{thm:local.exchangeable}, we have that $\widetilde T_r^{m+1}$ converges in distribution, as $n \to \infty$ or, equivalently, as $r \to 0$, to an exchangeable $\widetilde T_0^{m+1}$.  Since the non-conformity score vector, $\delta_r^{m+1}$, is a continuous function of $\widetilde T_r^{m+1}$, it follows from the continuous mapping theorem that, as $r \to 0$, $\delta_r^{m+1}$ converges in distribution to $\delta_0^{m+1}$, say, which is just the non-conformity measure applied to the limit $\widetilde T_0^{m+1}$.  Now, turning to the ranks, each permutation of $1,\ldots,m+1$ is a possible value of the ranks, and it corresponds to an event $A$ in the space of $\delta_0^{m+1}$.  The boundary of that event consists of cases in which there are ties in $\delta_0^{m+1}$.  By our continuity assumptions, this boundary has probability 0, which makes $A$ a continuity set.  Then it follows from the Portmanteau lemma \citep[e.g.,][Lemma~2.2]{vaart1998} that 
\[ \prob(\delta_r^{m+1} \in A) \to \prob(\delta_0^{m+1} \in A), \quad r \to 0. \]
This holds for the $A$ corresponding to each configurations of the ranks which, in turn, implies 
\[ \rank(m+1; \delta_r^{m+1}) \to \rank(m+1; \delta_0^{m+1}) \quad \text{in distribution, as $r \to 0$}. \]
Since the limit $\widetilde T_0^{m+1}$ is exchangeable, the structure of the non-conformity measure implies that $\delta_0^{m+1}$ is also exchangeable.  Therefore, $\rank(m+1; \delta_r^{m+1})$ converges in distribution to $U \sim \text{Unif}(\{1,\ldots,m,m+1\})$, so
\[ \lim_{n \to \infty} \prob^{m+1}\{ \Gamma_{s^\star}^\alpha(Z^m; X_{m+1}) \ni Y_{m+1} \} = \prob\{ U > \lfloor (m+1)\alpha \rfloor\}. \]
Finally, the probability on the right-hand side is $1-\alpha + O(m^{-1})$, so that the limiting coverage probability is approximately $1-\alpha$ when $m$ is large.

\subsection*{Theoretical efficiency of GSCP}
\label{S:efficiency}

\subsubsection*{Setup and statement of the result}

Here we establish an asymptotic, theoretical efficiency result for (a version of) our global spatial conformal prediction framework.  Since the focus is on asymptotics, we can consider a simplified version, a variation on the so-called {\em split conformal} prediction strategy \citep[e.g.,][]{lei2018distribution, lei2014distribution, vovk2005algorithmic}.  Without any real loss of generality, assume we have a sample of $2n$ location-covariate-response triples, i.e., $Z_i = (S_i, X_i, Y_i)$ for 
\[ i \in \calI_1^n = \{1,\ldots,n\} \quad \text{and} \quad i \in \calI_2^n = \{n+1,\ldots,2n\}. \]
We will use the $\calI_2^n$ data to estimate the prediction mean and standard deviation functions, and the $\calI_1^n$ data to do conformal prediction.  This simplifies the analysis considerably.  To see this, let $(\hat\mu_n, \hat\sigma_n)$ be the prediction mean and standard deviation functions estimated based on data of size $n$ in $\calI_2^n$.  If we are going to apply the non-conformity measure only to data in $\calI_1^n$, then it is just applying a fixed function to each $Z_i$,  
\[ \Delta_n(Z_i) = \Bigl| \frac{Y_i - \hat\mu_n(S_i,X_i)}{\hat\sigma_n(S_i,X_i)} \Bigr|, \quad i \in \calI_1^n, \]
and we do not need a subscript ``$(i)$'' to indicate that observation $i$ was excluded in training the prediction rule.  That is, in split conformal, the data (in $\calI_2^n$) for training the prediction rule does not mix with the data (in $\calI_1^n$) for constructing the prediction interval as it does the in full conformal prediction algorithm.  

Under this setup, the (split) conformal prediction interval can take a relatively simple form.  Let $\hat q_{n,\alpha}$ denote the upper $\alpha/2$ quantile of $\{\Delta_n(Z_i): i \in \calI_1^n\}$.  If we assume symmetry in the distribution of the signed $\Delta_n(Z_i)$'s---see below---then a $100(1-\alpha)$\% (split) conformal prediction interval for a new response value $\tilde Y$ associated with a new location-covariate pair $(\tilde S,\tilde X)$, is 
\[ \hat\mu_n(\tilde S, \tilde X) \pm \hat\sigma_n(S,X) \, \hat q_{n,\alpha}. \]
Clearly, the width of this interval is $2 \hat\sigma_n(\tilde S, \tilde X) \, \hat q_{n,\alpha}$.  Our goal is to show that this is nearly the width of the ``optimal'' prediction interval, as $n \to \infty$, and hence that (split) conformal is not only valid but asymptotically efficient.  

For the analysis that follows, we assume that the working model presented in Section~\ref{s:back:spatial} is correct, i.e., that the structural mean is linear in covariates $X$, that there is an underlying Gaussian process $\theta$ with isotropic \Matern\ covariance function depending on parameters $(\sigma^2, \phi, \kappa)$, and a Gaussian nugget with variance $\tau^2$.  This assumption is necessary because very little is known about the convergence properties of the Kriging estimates under misspecification; in other words, we would not be able to characterize an ``optimal'' prediction interval under weaker assumptions.  

When we consider ``$n \to \infty$,'' we are assuming an infill asymptotics regime wherein the range of spatial locations $\calD$ is a fixed, compact set but the number $n$ (or $2n$) of observed triples $(S,X,Y)$ is going to infinity.  Also assume that the range of $X$ is compact.  Let $\mu^\star$ and $\sigma^\star$ be the true prediction mean and standard deviation functions under the working model.  More specifically, $\mu^\star = \mu_\theta^\star$ depends on the Gaussian process $\theta$ as is given by 
\[ \mu^\star(s,x) = x^\top \beta^\star + \theta(s), \]
and $\sigma^\star(s,x) \equiv \tau^\star$ is a constant, the nugget standard deviation.  If all these starred quantities were known, then one could construct a prediction interval for a new response $\tilde Y$ associated with a new location-covariate pair $(\tilde S, \tilde X)$ as follows.  Define 
\[ \Delta^\star(\tilde Z) = \Bigl| \frac{\tilde Y - \mu^\star(\tilde S, \tilde X)}{\sigma^\star(\tilde S, \tilde X)} \Bigr|, \quad \tilde Z=(\tilde S, \tilde X, \tilde Y), \]
which is just the non-conformity measure applied to $\tilde Z$ using the true prediction mean and standard deviation functions.  Under the true working model, the standardized residual in the above display is standard normal.  Let $q_\alpha^\star$ denote the upper $\alpha/2$ quantile of $\text{Normal}(0,1)$.  Then the ``optimal'' prediction interval for $\tilde Y$ at $(\tilde S, \tilde X)$ would be 
\[ \mu^\star(\tilde S, \tilde X) \pm q_\alpha^\star \, \sigma^\star(\tilde S, \tilde X). \]
Clearly, the length of this optimal prediction interval is $2 q_\alpha^\star \, \sigma^\star(\tilde S, \tilde X)$.  Define the absolute difference (ignoring the factor 2) between the widths of the split conformal and optimal prediction intervals as 
\[ \Xi_n(\tilde S, \tilde X) = |\hat q_{n,\alpha} \, \hat\sigma_n(\tilde S, \tilde X) - q_\alpha^\star \, \sigma^\star(\tilde S, \tilde X)|, \]
which is a function of the data $\{Z_i: i \in \calI_2^n\}$ and the new location-covariate pair $(\tilde S, \tilde X)$.  Our claim is that $\Xi_n(\tilde S, \tilde X)$ vanishes in probability at a certain rate as $n \to \infty$.  This is similar to what \citet{lei2018distribution} prove in their Theorem~6. 

Finally, given a vanishing sequence $\eta_n \to 0$, define the event $\mathcal{A}_n$ as 
\[ \mathcal{A}_n = \Bigl\{ \Bigl\| \frac{\hat\mu_n - \mu^\star}{\hat\sigma_n} \Bigr\|_\infty \vee \Bigl\| \frac{\hat\sigma_n - \sigma^\star}{\hat\sigma_n} \Bigr\|_\infty \leq \eta_n \Bigr\}, \]
where the mean and standard deviations are functions of location-covariate pairs in a compact domain, and $\|\cdot\|_\infty$ is the supremum norm.  Note that the best possible rate is $\eta_n \sim c_n n^{-1/2}$, where $c_n \to \infty$ arbitrarily slowly.  

\begin{thm}
\label{thm:efficiency}
If $\eta_n \to 0$ is such that $\prob(\mathcal{A}_n) \to 1$ as $n \to \infty$, 
then, 
\[ \Xi_n(\tilde S, \tilde X) = O_\prob\{\eta_n(\log n)^{1/2}\}, \quad n \to \infty. \]
\end{thm}

\begin{proof}
See below.
\end{proof}

Here we make two remarks about the theorem---one about the condition and the other about the conclusion.  First, \citet{lei2018distribution} consider an event similar to our $\mathcal{A}_n$.  Ours appears more complicated because we work with standardized residuals and, therefore, rely on both prediction mean and standard deviation functions, whereas they use unstandardized residuals and only analyze a prediction mean function.  It is well known that under infill asymptotics, the estimators of the spatial correlation parameters generally are not consistent \citep[e.g.,][]{zhang2004}, but the prediction mean and standard deviation functions can be estimated consistently \citep[e.g.,][and references therein]{tang2019identifiability}.  Moreover, under certain regularity conditions as described in \cite{tang2019identifiability}, we expect that the rate $\eta_n$ in the above theorem would be $n^{-r}$, where $r=\frac{\kappa}{2(\kappa + 1)} < 1/2$ and $\kappa > 0$ is the $\Matern$ covariance function's smoothness parameter.  Whether the assumption of Theorem~\ref{thm:efficiency} holds for this $\eta_n$ has yet to be confirmed.  

Second, the logarithmic term in the rate is a result of the proof technique and so likely is not needed.  Therefore, we conjecture that the result could be improved to $\Xi_n(\tilde S, \tilde X) = O_\prob(\eta_n)$, but this slight improvement in the rate would require a different or substantially more involved proof.  So we leave establishing the improved convergence rate result as a topic of future research.

\subsubsection*{Proof of Theorem~\ref{thm:efficiency}}

To start, write 
\[ \Xi_n(\tilde S, \tilde X) \leq \hat\sigma_n(\tilde S, \tilde X) \, | \hat q_{n,\alpha} - q_\alpha^\star| + |\hat\sigma_n(\tilde S, \tilde X) - \sigma^\star(\tilde S, \tilde X)| \, q_\alpha^\star. \]
The condition $\prob(\mathcal{A}_n) \to 1$ has two important implications: first,  $\hat\sigma_n(\tilde S, \tilde X) = O_\prob(1)$; second, $|\hat\sigma_n(\tilde S, \tilde X) - \sigma^\star(\tilde S, \tilde X)| = O_\prob(\eta_n)$.  Therefore, if we can show that $|\hat q_{n,\alpha} - q_\alpha^\star| = O_\prob\{\eta (\log n)^{1/2}\}$, then we are done.  

Let $\hat q_{n,\alpha}^\star$ denote the upper $\alpha/2$ quantile of the empirical distribution of 
\[ \{\Delta^\star(Z_i): i \in \calI_1^n\}. \]
Then we have 
\[ |\hat q_{n,\alpha} - q_\alpha^\star| \leq |\hat q_{n,\alpha} - \hat q_{n,\alpha}^\star| + |\hat q_{n,\alpha}^\star - q_\alpha^\star|. \]
Then for a generic sequence $\zeta_n$,
\begin{align}
\prob\{|\hat q_{n,\alpha} &- q_\alpha^\star| > \zeta_n\} \notag \\
& \leq \prob\{|\hat q_{n,\alpha} - \hat q_{n,\alpha}^\star| + |\hat q_{n,\alpha}^\star - q_\alpha^\star| > \zeta_n\} \notag \\
& \leq \prob\{|\hat q_{n,\alpha} - \hat q_{n,\alpha}^\star| > \tfrac12 \zeta_n\} + \prob\{|\hat q_{n,\alpha}^\star - q_\alpha^\star| > \tfrac12 \zeta_n\}. \label{eq:qbound}
\end{align}
It remains to show that, for $\zeta_n \sim \eta_n (\log n)^{1/2}$, both terms in \eqref{eq:qbound} are $o(1)$.  

Start with the second term in \eqref{eq:qbound}.  Given the pair of processes $(X,\theta)$, the $Z_i$'s are iid, so, conditionally, the empirical quantile $\hat q_{n,\alpha}^\star$ would converge to the standard normal quantile $q_\alpha^\star$ at a $n^{-1/2}$ rate.  Indeed,
\[ \prob\{|\hat q_{n,\alpha}^\star - q_\alpha^\star| > \tfrac12 \zeta_n\} = \E \bigl[ \prob\{|\hat q_{n,\alpha}^\star - q_\alpha^\star| > \tfrac12 \zeta_n \mid X, \theta\} \bigr], \]
where the right hand side is the expectation of a version of the conditional distribution, given the processes $(X,\theta)$.  For almost all $(X,\theta)$, the conditional probability vanishes since $\zeta_n \gg n^{-1/2}$.  The conditional probability is also bounded, so the dominated convergence theorem implies that its expectation, $\prob\{|\hat q_{n,\alpha}^\star - q_\alpha^\star| > \tfrac12 \zeta_n\}$, vanishes as well.  

Now take the first term in \eqref{eq:qbound}.  To start, for any $z=(s,x,y)$, it is easy to check that 
\[ |\Delta_n(z) - \Delta^\star(z)| \leq \Bigl| \frac{\hat\mu_n(s,x) - \mu^\star(s,x)}{\hat\sigma_n(s,x)} \Bigr| + \Bigl| \frac{\sigma^\star(s,x)}{\hat\sigma_n(s,x)} - 1 \Bigr| \Delta^\star(z). \]
Define the event 
\[ \mathcal{B}_n = \Bigl\{ \max_{i \in \calI_1^ n} \Delta^\star(Z_i) \leq c(\log n)^{1/2} \Bigr\}, \]
for a suitable constant $c > 0$.  If we restrict ourselves to samples in the event $\mathcal{A}_n \cap \mathcal{B}_n$, then, for any $z$, 
\[ |\Delta_n(z) - \Delta^\star(z)| \leq \eta_n\{1 + \Delta^\star(z)\} \lesssim \eta_n (\log n)^{1/2}. \]
As in \citet{lei2018distribution}, if all the differences are uniformly bounded, then the corresponding difference in sample quantiles is also bounded.  That is, the above display implies $|\hat q_{n,\alpha} - \hat q_{n,\alpha}^\star| \lesssim \eta_n (\log n)^{1/2}$.  Therefore, 
\[ 1-\prob\{|\hat q_{n,\alpha} - \hat q_{n,\alpha}^\star| > \tfrac12 \zeta_n\} \geq \prob(\mathcal{A}_n \cap \mathcal{B}_n). \]
We know that $\prob(\mathcal{A}_n) \to 1$ so it suffices to show that $\prob(\mathcal{B}_n) \to 1$ too.  Do the same conditioning operation as above:
\[ \prob(\mathcal{B}_n^c) = \E\{ \prob(\mathcal{B}_n^c \mid X,\theta) \}. \]
Given the processes $(X,\theta)$, the the standardized residuals in $\Delta^\star(Z_i)$ are iid $\text{Normal}(0,1)$ and, hence, we can apply the standard Gaussian maximal inequality to conclude that $\prob(\mathcal{B}_n^c \mid X,\theta) \to 0$ for almost all $(X,\theta)$.  Then the dominated convergence theorem again gives $\prob(\mathcal{B}_n) \to 1$, from which we can conclude that $\prob\{|\hat q_{n,\alpha} - \hat q_{n,\alpha}^\star| > \tfrac12 \zeta_n\} \to 0$, hence completing the proof.

\subsection*{Additional simulation results}
\label{s:A3}

This appendix presents some additional simulation results. First, Figure \ref{f:sim:overview} shows a realization from each scenario to help with visualizing the various spatial correlation patterns. Second, we compare two variations of the proposed (local) spatial conformal prediction strategy: a split version where the working model parameters are estimated based on one half of the data, and then conformal with the estimated parameters applied on the other half, and a plug-in version where the parameters are estimated via and conformal prediction is applied to the entire data set.  Finally, we evaluate the proposed methods in applications where the data sets include covariates, and study sensitivity to the choice of tuning parameters and the base model.

\begin{figure}
\caption{One realization each from simulation scenarios with $N=40.$}\label{f:sim:overview}
\centering
\includegraphics[width=0.6\textwidth]{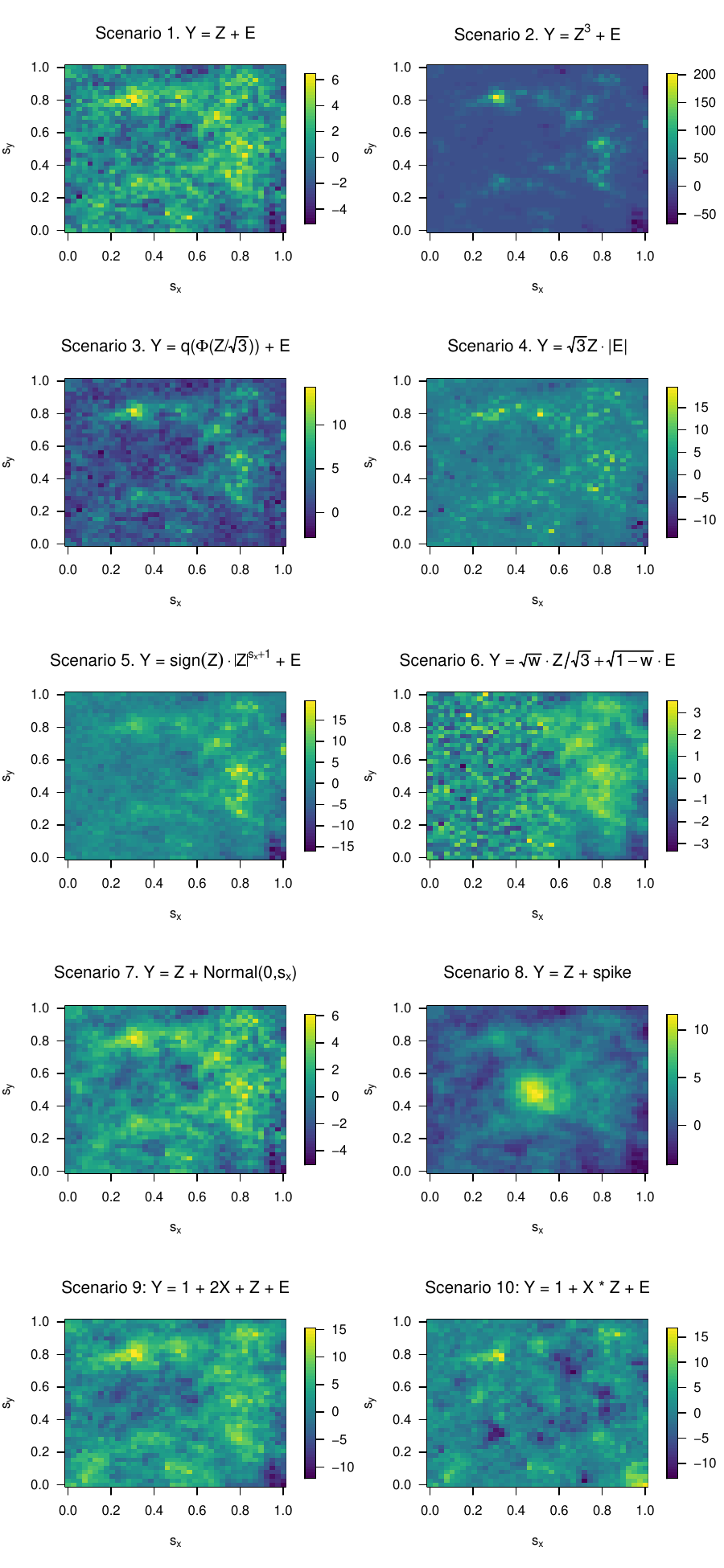}
\end{figure}

\subsubsection*{Split versus plug-in conformal prediction results}

In split conformal prediction \citep[e.g.,][]{lei2018distribution, lei2014distribution, vovk2005algorithmic}, the training data are split (equally and randomly) into two groups. The first group is used to estimate the model parameters $\Theta$ and is then discarded.  The algorithm then proceeds with only the remaining training observations for conformal prediction, including evaluating the base model (given parameters estimates) and plausibilities.  Table \ref{t:sim_conformal} compares the results using the settings defined in the main document with and without split conformal.  Although split conformal maintains good coverage, the intervals are a little wider on average and thus the interval scores are higher.

\begin{table}
\caption{\label{t:sim_conformal}Performance comparison for simulation scenarios (``Scen'') with (``- split'') and without split conformal. The metrics are the empirical coverage of 90\% prediction intervals (``Cov90''), the width of prediction intervals (``Width'') and the interval score (``IntScore''), each averaged over location and dataset. The methods are global (GSCP) and local (LSCP) conformal prediction. }
\begin{tabular}{ll|ccc|ccc}
 & & \multicolumn{3}{c|}{$N=20$} & \multicolumn{3}{c}{$N=40$}   \\
 Scen & Method & Cov90 & Width & IntScore & Cov90 & Width & IntScore \\ 
  \hline
\ref{scen:stationary}
& GSCP - split & 0.905 & 5.11 & 6.32 & 0.893 & 4.19 & 5.35 \\       
& GSCP         & 0.906 & 4.67 & 5.78 & 0.897 & 4.00 & 5.06 \\               
& LSCP - split & 0.870 & 4.92 & 6.81 & 0.883 & 4.16 & 5.47 \\      
& LSCP         & 0.890 & 4.57 & 5.95 & 0.891 & 3.99 & 5.12 \\   
&&&&&&&\vspace{-6pt}\\
\ref{scen:cubic}
& GSCP - split & 0.900 & 39.32 & 78.75 & 0.897 & 26.06 & 51.36 \\   
& GSCP & 0.895 & 33.62 & 67.16 & 0.897 & 22.12 & 43.80 \\           
& LSCP - split & 0.885 & 36.76 & 72.26 & 0.907 & 25.28 & 43.26 \\   
& LSCP & 0.896 & 31.05 & 58.37 & 0.910 & 21.74 & 36.47 \\           
&&&&&&&\vspace{-6pt}\\
\ref{scen:skewed}
& GSCP - split & 0.905 & 5.31 & 7.19 & 0.894 & 4.25 & 5.61 \\       
& GSCP & 0.908 & 4.79 & 6.32 & 0.895 & 4.05 & 5.27 \\               
& LSCP - split & 0.869 & 5.17 & 7.42 & 0.889 & 4.23 & 5.62 \\       
& LSCP & 0.893 & 4.65 & 6.27 & 0.893 & 4.03 & 5.24 \\               
&&&&&&&\vspace{-6pt}\\
\ref{scen:prod} 
& GSCP - split & 0.906 & 8.03 & 12.27 & 0.896 & 6.51 & 10.62 \\     
& GSCP & 0.902 & 7.06 & 11.06 & 0.895 & 6.25 & 10.25 \\             
& LSCP - split & 0.873 & 7.58 & 12.79 & 0.892 & 6.30 & 10.32 \\     
& LSCP & 0.892 & 6.84 & 11.23 & 0.895 & 6.08 &  9.74 \\             
&&&&&&&\vspace{-6pt}\\
\ref{scen:gfun} 
& GSCP - split & 0.900 & 7.47 & 11.02 & 0.895 & 5.44 & 7.79 \\      
& GSCP & 0.900 & 6.51 &  9.40 & 0.898 & 5.03 & 7.11 \\              
& LSCP - split & 0.869 & 7.21 & 11.07 & 0.888 & 5.45 & 7.50 \\      
& LSCP & 0.887 & 6.36 &  9.06 & 0.892 & 5.05 & 6.75 \\              
&&&&&&&\vspace{-6pt}\\
\ref{scen:eastwest} 
& GSCP - split & 0.901 & 2.99 & 3.83 & 0.897 & 2.67 & 3.64 \\       
& GSCP & 0.894 & 2.78 & 3.69 & 0.896 & 2.63 & 3.60 \\               
& LSCP - split & 0.861 & 2.79 & 3.81 & 0.889 & 2.47 & 3.23 \\       
& LSCP & 0.878 & 2.66 & 3.47 & 0.895 & 2.38 & 3.06 \\               
&&&&&&&\vspace{-6pt}\\
\ref{scen:nugget} 
& GSCP - split & 0.904 & 4.14 & 5.14 & 0.897 & 3.02 & 4.01 \\       
& GSCP & 0.905 & 3.63 & 4.55 & 0.896 & 2.77 & 3.71 \\              
& LSCP - split & 0.870 & 4.00 & 5.49 & 0.890 & 2.94 & 3.86 \\       
& LSCP & 0.888 & 3.53 & 4.59 & 0.896 & 2.70 & 3.46 \\               
&&&&&&&\vspace{-6pt}\\
\ref{scen:spike} 
& GSCP - split & 0.911 & 3.85 & 4.79 & 0.897 & 2.25 & 2.89 \\      
& GSCP & 0.906 & 3.04 & 3.74 & 0.899 & 1.93 & 2.45 \\               
& LSCP - split & 0.872 & 3.72 & 5.01 & 0.888 & 2.23 & 2.92 \\       
& LSCP & 0.880 & 3.00 & 3.91 & 0.895 & 1.92 & 2.46 \\               
\end{tabular}
\end{table}

\subsubsection*{Simulations with covariates}

To study the performance of the method with covariates, we consider two additional scenarios
\begin{enumerate}[label=\arabic*, labelsep=0pt]
    \item[9] \label{scen:addcov}. $Y(s) = 1 + 2X(s) + Z(s) + E(s)$; 
    \item[10] \label{scen:prodcov}. $Y(s) = 1 + X(s)Z(s) + E(s)$. 
\end{enumerate}
where the covariate process $X(s)$ is a mean-zero stationary Gaussian process processes with $\Matern$ covariance function with variance $\sigma^2=3$, range $\phi=0.1$ and smoothness $\kappa = 0.7$. Table \ref{t:sim_withCov} summarizes the performance of the methods with and without using the covariate in the analysis.  In Scenario 9, the covariate enters the model additively as assumed by all methods and thus including the covariate reduces the interval score.  In Scenario 10, the covariate is not additive and the Kriging models are misspecified. However, all conformal methods maintain coverage near the nominal level when $N=40$.

\begin{table}
\caption{\label{t:sim_withCov}Performance comparison for simulation scenarios with a covariate. The metrics are the empirical coverage of 90\% prediction intervals (``Cov90''), the width of prediction intervals (``Width'') and the interval score (``IntScore''), each averaged over location and dataset. The methods are global (GSCP) and local (LSCP) conformal prediction, stationary and Gaussian Kriging (``Kriging'').  Methods that use the covariate are denoted ``(X)''. }
\centering
\begin{tabular}{ll|ccc|ccc}
 & & \multicolumn{3}{c|}{$N=20$} & \multicolumn{3}{c}{$N=40$}   \\
 Scen & Method & Cov90 & Width & IntScore & Cov90 & Width & IntScore \\ 
  \hline
9
& GSCP       & 0.893 & 7.53 & 9.61 & 0.899 & 5.54 & 6.88\\ 
& GSCP (X)    & 0.896 & 4.48 & 5.90 & 0.900 & 4.04 & 5.02\\ 
& LSCP       & 0.877 & 7.45 & 9.92 & 0.897 & 5.52 & 6.94\\ 
& LSCP (X)    & 0.885 & 4.62 & 6.10 & 0.896 & 4.02 & 5.06\\ 
& Kriging    & 0.909 & 7.89 & 9.59 & 0.888 & 5.39 & 6.96\\ 
& Kriging (X) & 0.903 & 4.76 & 5.89 & 0.889 & 3.91 & 5.03\\ 
&&&&&&&\vspace{-6pt}\\
10
& GSCP       & 0.898 & 7.73 & 11.16 & 0.899 & 5.68 & 7.68\\ 
& GSCP (X)    & 0.898 & 7.65 & 11.09 & 0.899 & 5.67 & 7.62\\ 
& LSCP       & 0.885 & 7.60 & 10.90 & 0.897 & 5.67 & 7.56\\ 
& LSCP (X)    & 0.886 & 7.52 & 10.90 & 0.897 & 5.65 & 7.51\\ 
& Kriging    & 0.910 & 8.19 & 11.21 & 0.885 & 5.49 & 7.79\\ 
& Kriging (X) & 0.909 & 8.09 & 11.13 & 0.880 & 5.44 & 7.74\\ 
\end{tabular}
\end{table}

\subsubsection*{Sensitivity analysis}\label{s:sim:sensitivity}

Here we study the proposed method's sensitivity to tuning parameter choice and to the initial $\Matern$ covariance parameter estimates. Because these results are not addressing parameter estimation, in this subsection we estimate the model parameters using the entire dataset (or treat them as fixed, as described in the next paragraph) and then perform leave-one-out cross validation.  Figure \ref{f:sensitivity} gives the average interval score of LSCP with different tuning parameter $\eta$ for Scenarios \ref{scen:gfun} and \ref{scen:eastwest}, along with the average interval score of  GSCP, Kriging, and laGP. We find that there is a V-shaped relationship between resulting interval score and tuning parameter $\eta$. The average performance is the best when $\eta=0.2$, and LSCP outperforms the other methods regardless the selection of $\eta$. To achieve the best performance, in practice, we would select $\eta$ using cross validation. 

\begin{figure}
\centering 
\includegraphics[width=1\textwidth]{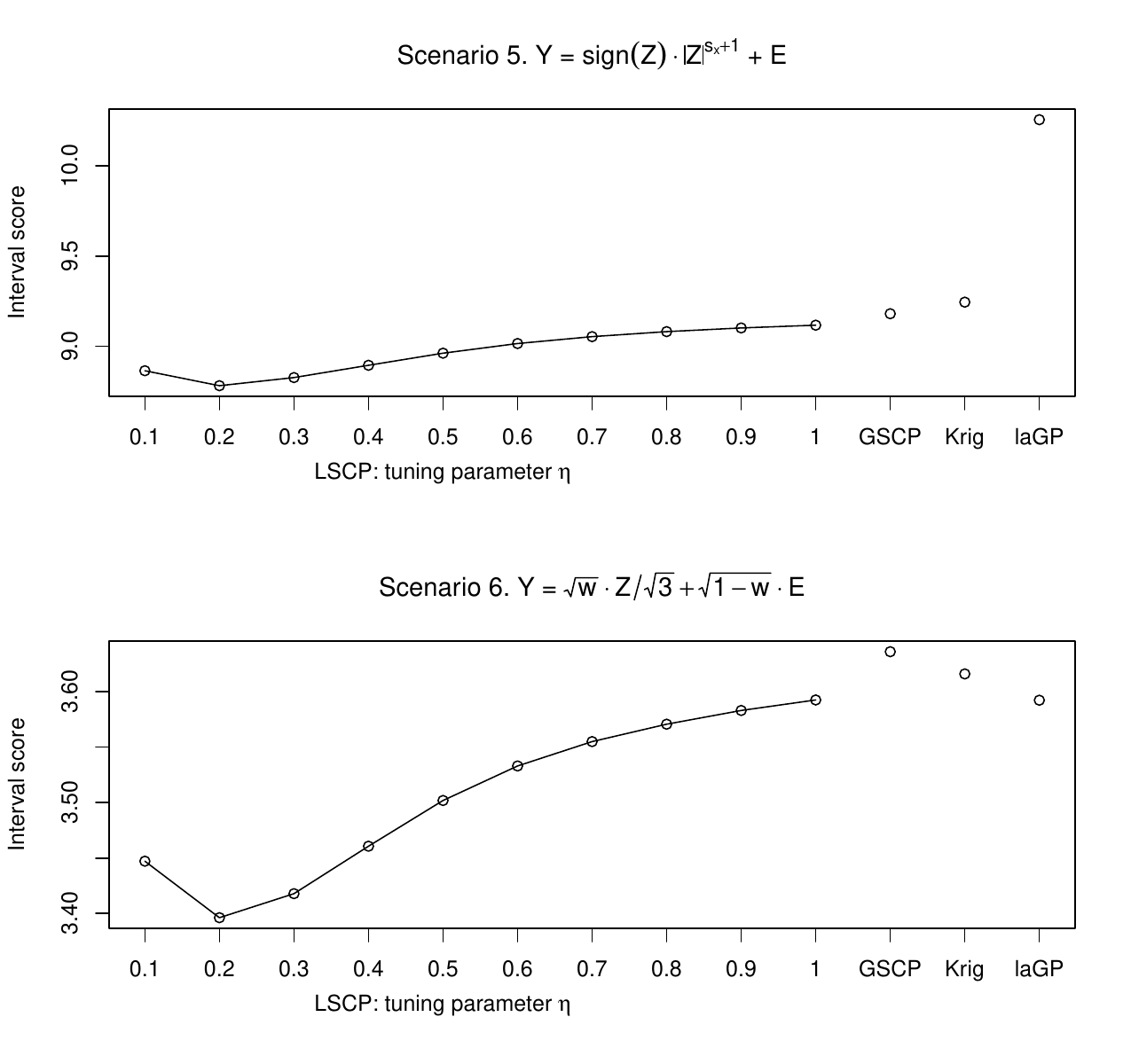}
\caption{Sensitivity analysis for simulation Scenarios \ref{scen:gfun} and \ref{scen:eastwest}.  The interval score for the LSCP method is computed for several kernel bandwidth parameters, $\eta$, and compared to other methods.}
\label{f:sensitivity}
\end{figure}

The proposed spatial conformal prediction methods rely on estimation of the $\Matern$ covariance parameter, $\Theta$. To test for sensitivity to this step, we repeat simulation Scenario 1 since in this case the optimal $\Theta$ is known to be the true $\Matern$ covariance parameters used to generate the data. When generating prediction intervals, rather than estimating the $\Matern$ covariance parameter $\hat\Theta$ from the simulated data, we use the values of $\hat\Theta$ listed in the first four columns of Table \ref{t:sensitivity_thetaHat} for conformal prediction and Kriging, where each of the four parameters is modified by plus or minus 50\% of the true value. We use the GSCP method, $\alpha=0.1$ and $N=20$ for this sensitivity analysis. The corresponding average coverage and average width of the conformal prediction intervals and the Kriging intervals are listed in the last four columns.

\begin{table}
\caption{\label{t:sensitivity_thetaHat} Sensitivity analysis for the simulation study. The first four columns represent the $\Matern$ covariance parameter ($\Theta$) used to generate the conformal and Kriging prediction intervals; the first row gives the true value. The last four columns give the average coverage and interval width of the 90\% intervals produced by the conformal and Kriging methods when using the corresponding $\Theta$.}
\centering
\begin{tabular}{cccc|cc|cc}
Nugget& Partial sill & Range  & Smoothness & \multicolumn{2}{c|}{GSCP} & \multicolumn{2}{c}{Kriging} \\
($\tau^2$) &($\sigma^2$) &($\phi$)& ($\kappa$) & Cov90 & Width & Cov90 & Width \\  \hline
1.0 & 3.0 & 0.10 & 0.70 & 89.69\% & 4.57 & 90.05\% & 4.60 \\ 
1.5 & 3.0 & 0.10 & 0.70 & 89.69\% & 4.61 & 93.73\% & 5.24 \\ 
0.5 & 3.0 & 0.10 & 0.70 & 89.70\% & 4.59 & 82.76\% & 3.84 \\ 
1.0 & 4.5 & 0.10 & 0.70 & 89.68\% & 4.63 & 92.64\% & 5.04 \\ 
1.0 & 1.5 & 0.10 & 0.70 & 89.68\% & 4.63 & 85.25\% & 4.09 \\ 
1.0 & 3.0 & 0.15 & 0.70 & 89.68\% & 4.60 & 86.71\% & 4.21 \\ 
1.0 & 3.0 & 0.05 & 0.70 & 89.67\% & 4.61 & 95.13\% & 5.54 \\ 
1.0 & 3.0 & 0.10 & 1.05 & 89.69\% & 4.58 & 85.56\% & 4.08 \\ 
1.0 & 3.0 & 0.10 & 0.35 & 89.70\% & 4.63 & 94.98\% & 5.53
\end{tabular}
\end{table}

While using the true $\Theta$ leads to slightly more narrow intervals, the average coverage of the conformal prediction intervals are between 89.67\% and 89.70\% across $\Theta$.  In contrast, the coverage of the Kriging intervals range from 82.76\% to 95.13\% across $\Theta$.  Therefore, the conformal methods are more robust to the estimation of $\Matern$ covariance parameters than the parametric Kriging model.

\section*{Acknowledgments}

The authors thank the reviewers for their thoughtful and critical comments on a previous version of the manuscript. HM (DMS--1638521) and RM (DMS--1811802) are supported by the National Science Foundation. BJR is supported by the National Institutes of Health (R01ES031651 and R01ES027892), the National Science Foundation (DMS--2152887) and King Abdullah University of Science and Technology (3800.2).

\bibliographystyle{apalike} 
\bibliography{refs}

\end{document}


\begin{center}
    {\Large Valid model-free spatial prediction}\\
   {\large Supplemental materials}
\end{center}

\section{Computational details}\label{s:A1}

We describe the computational algorithm for the case of no covariates.  If there are covariates, the algorithm below is applied to the residuals and the covariate effects are added back to the resulting prediction interval.  We first give prediction equation for the base Kriging model. To make predictive inference at the new location $s_{n+1}$, we study the random variable $Y_{n+1} = Y(s_{n+1})$ together with the other $Y_i$'s.  The full data set 
is denoted by $Z^{n+1} = \{Z_1,...,Z_n, Z_{n+1}\}$.  The collection of data excluding $Z_i$ is denoted by $Z_{(i)}^{n+1}$.  The joint distribution of $Y^{n+1}=(Y_1,...,Y_n, Y_{n+1})$, given $(s_1,...,s_n,s_{n+1})$, the $(n+1)\times d$ covariate matrix $X^{n+1}$ and the spatial covariance parameters $\Theta=\{\sigma^2,\tau^2,\phi, \kappa\}$, is 
$ Y^{n+1} \sim \mbox{Normal}\big(X^{n+1}\beta,\Sigma(\Theta)\big)$,
where $\Sigma(\Theta)$ is a $(n+1)\times(n+1)$ covariance matrix with $(i,j)$ element 
$ \sigma^2\rho(d_{ij}; \phi, \kappa) + I(i=j)\tau^2$.
Let ${\widehat \beta}$ and ${\widehat \Theta}$ be estimates of $\beta$ and $\Theta$, respectively, and ${\widehat Q} = \Sigma({\widehat \Theta})^{-1} = \{{\hat q}_{ij}\}$.  The Kriging prediction and variance of $Y_i$, given $\beta={\widehat \beta}$, $\Theta  = {\widehat \Theta}$ and $Z_{(i)}^{n+1}$, are
\begin{equation}
\label{eq:kriging_supp}
\begin{split}
\hat\mu_{n+1,i}(s_i, x_i) & = \E(Y_i \mid Z_{(i)}^{n+1}, s_i, x_i) 
 = x_i^\top{\widehat \beta} -\hat q_{ii}^{-1} \sum_{j\ne i}{\hat q}_{ij}(y_j- x_j^\top{\widehat \beta})\\
\hat \sigma_{n+1,i}^2(s_i, x_i) & = \var(Y_i \mid Z_{(i)}^{n+1}, s_i, x_i) = {\hat q}_{ii}^{-1}, 
\end{split}
\end{equation}
for each $i=1,\ldots,n,n+1$.  

Applying the predictive distributions in \eqref{eq:kriging_supp} and provisionally setting $Y_{n+1}=y$, for $i\ne n+1$ we can write
\begin{eqnarray}\label{e:quad}
\delta_i-\delta_{n+1} &=& q_{ii}\left(Y_i-{\tilde Y}_{i,n+1}+\frac{q_{i,n+1}}{q_{ii}}y\right)^2 - q_{n+1,n+1}\left(y-{\hat Y}_{n+1}\right)^2 \\
&=& U_{i} + V_{i}y+W_{i}y^2\nonumber
\end{eqnarray}
where ${\tilde Y}_{i,n+1} = -\sum_{j\ne\{i,n+1\}}q_{ij}Y_j/q_{ii}$, $U_{i} = q_{ii}(Y_i-{\tilde Y}_{i,n+1})^2-q_{n+1,n+1}{\hat Y}_n^2$, $V_{i} = 2q_{i,n+1}(Y_i-{\tilde Y}_{i,n+1}) + 2q_{n+1,n+1}{\hat Y}_{n+1}$ and $W_{i} = q_{i,n+1}^2/q_{ii} -q_{n+1,n+1}$. 

Since the inverse covariance matrix $Q$ is positive definite, $q_{ii}>0$ and $q_{ii}>\max_{j\ne i}|q_{ij}|$, so $W_{i} <0$ and $V_i^2-4U_iW_i = \frac{4q_{n+1,n+1}}{q_{ii}}[\sum_{i=1}^n q_{ij}Y_j+q_{i,n+1}\hat Y_{n+1}]^2\geq0$. Therefore, the $y$ satisfying $\delta_i-\delta_{n+1}\geq0$ are within two roots, denoted as $a_i \leq b_i$, of the quadratic equation $U_{i} + V_{i}y+W_{i}y^2=0$. Then the plausibility calculation \eqref{eq:GSCP:plausibility} becomes \begin{eqnarray*}
p(y|Z_{(n+1)}^{n+1})&=&\frac{1}{n+1}\sum_{i=1}^{n}1\{U_{i} + V_{i}y+W_{i}y^2\geq0\}+\frac{1}{n+1}\\ &=& \frac{1}{n+1}\sum_{i=1}^{n}1\{a_i \leq y \leq b_i\}+\frac{1}{n+1},\end{eqnarray*} 
which is a step function with jumping points being $a_i$'s and $b_i$'s, and the plausibility function for sLSCP in \eqref{eq:sLSCP:plausibility} simply becomes 
$\sum_{i=1}^{n}w_i1\{a_i \leq y \leq b_i\}$. With the help of the step function, we can solve for the prediction interval directly with no need to enumerate for possible $y$ satisfying $p(y|Z_{(n+1)}^{n+1})\geq t_n(\alpha)$. The GSCP and LSCP steps are summarized in Algorithms \ref{ag:gscp} and \ref{ag:lscp}, and smoothed LSPC algorithm is given in the main text.

\begin{algorithm}[t]
\SetAlgoLined
\KwIn{observations $z_i = (s_i, x_i, y_i)$, $i=1,\ldots,n$; prediction location and covariate $(s_{n+1}, x_{n+1})$; non-conformity measure $\Delta$; significance level $\alpha$; and a fine grid of candidate $y_{n+1}$ values}
\KwOut{$100(1-\alpha)\%$ prediction interval, $\Gamma^\alpha = \Gamma^\alpha(z^n; s_{n+1}, x_{n+1})$, for the response $Y_{n+1} = Y(s_{n+1})$ at $x_{n+1} = X(s_{n+1})$.}

\For{provisional values $y_{n+1}$ in the specified grid}{
\For{$i=1,\ldots,n+1$}{ 
$z_{(i)}^{n+1} \gets z^{n+1} \setminus \{z_i\}$\; 
$\delta_i \gets \Delta(z_{(i)}^{n+1}, z_i)$\; 
}
compute plausibility for $y_{n+1}$ as $p(y_{n+1} \mid z^n, s_{n+1}, x_{n+1})$ in \eqref{eq:GSCP:plausibility}\;
include $y_{n+1}$ in $\Gamma^\alpha$ if $p(y_{n+1} \mid z^n, s_{n+1}, x_{n+1}) \geq t_n(\alpha)$\;
}
\Return $\Gamma^\alpha$.

\caption{Global spatial conformal prediction (GSCP).} \label{ag:gscp}
\end{algorithm}

\begin{algorithm}[t]
\SetAlgoLined
\KwIn{observations $z_i = (s_i, x_i, y_i), i=1,\ldots,n$; prediction location and covariate $(s^\star, x_{m+1})$; non-conformity measure $\Delta$; significance level $\alpha$; and a fine grid of candidate $y^\star$ values}
\parameter{number of neighbors to consider $m \leq n$}
\KwOut{$100(1-\alpha)\%$ prediction interval, $\Gamma_{s^\star}^\alpha = \Gamma_{s^\star}^\alpha(z^m; x_{m+1})$, for $Y(s^\star)$ with $X(s^\star) = x_{m+1}$: }

form $z_i, i = 1, \ldots, m$, based on $m$ locations closest to $s^\star$\; 
$s_{m+1} \gets s^\star$\;
\For{provisional values $y_{m+1}$ in the specified grid}{
\For{$i = 1$ to $m+1$}{
$z^{m+1}_{(i)} \gets z^{m+1} \setminus \{z_i\}$\;
$\delta_i \gets \Delta(z^{m+1}_{(i)}, z_i)$\;}
compute plausibility $p(y_{m+1} \mid z^m, s^\star, x_{m+1})$ in \eqref{eq:LSCP:plausibility}\;
include $y_{m+1}$ in $\Gamma_{s^\star}^\alpha$ if $p(y_{m+1} \mid z^m, s^\star, x_{m+1}) \geq t_m(\alpha)$\;}
\Return $\Gamma_{s^\star}^\alpha$.

\caption{Local conformal spatial prediction (LSCP).}\label{ag:lscp}
\end{algorithm}

\section{Proofs from Sections~\ref{s:GSCP}--\ref{s:LSCP}}

\subsection{Proof of Proposition 1}
\label{proof:local.exchangeable}

Write the localized version of the $T$ process as 
\[ \widetilde T_r(u) = \psi \bigl( \widetilde L_r(u), \widetilde E_r(u) \bigr), \quad u \in \calU, \]
where $\psi$ is a continuous function of two arguments and the $L$ and $E$ components are $L_2$-continuous and locally iid, respectively.  We are interested in the finite-dimensional distribution of the localized process, so take a fixed set of $m$ vectors $u_1,\ldots,u_m$ in $\calU$.  Define the vectors 
\[ \widetilde L_r^m = \bigl( \widetilde L_r(u_1),\ldots, \widetilde L_r(u_m) \bigr) \quad \text{and} \quad \widetilde E_r^m = \bigl( \widetilde E_r(u_1), \ldots, \widetilde E_r(u_m) \bigr). \]
First, a bit of notation.  If $w$ is a generic object, let $w^{\otimes m}$ denote a copy of $m$ versions of $w$ in in rows.  For example, if $w$ is a scalar, then $w^{\otimes m} = w 1_m$, where $1_m$ is an $m$-vector of unity.  Alternatively, if $w$ is a (column) vector, then $w^{\otimes m}$ is a matrix with $m$ identical rows, each containing $w^\top$.  Also, let $\|\cdot\|$ denote either the usual Euclidean norm on $\RR^m$ or the Frobenius norm on matrices with $m$ rows, depending on the dimension of its argument.  

For the $L$ part in the above representation, Markov's inequality implies
\[ \prob\{\|\widetilde L_r^m - L(s^\star)^{\otimes m}\| > \varepsilon\} \leq \varepsilon^{-2} \, \E\|\widetilde L_r^M - L(s^\star)^{\otimes m}\|^2, \quad \text{for any $\varepsilon > 0$}. \]
The expectation in the upper bound can be rewritten as 
\begin{align*}
\E\|\widetilde L_r^m - L(s^\star)^{\otimes m}\|^2 & = \sum_{i=1}^m \E\|\widetilde L_r(u_i) - L(s^\star)\|^2 \\
& = \sum_{i=1}^m \E\|L(s^\star + r u_i) - L(s^\star)\|^2, 
\end{align*}
and, since $m$ is fixed, the assumed $L_2$-continuity of the $L$ process implies that the right-hand side vanishes as $r \to 0$.  Therefore, we have that $\widetilde L_r^m \to L(s^\star)^{\otimes m}$ in probability and, hence, in distribution.  

For the $E$ part in the above representation, locally iid implies 
\[ \widetilde E_r^m \to \widetilde E_0^m, \quad \text{in distribution, as $r \to 0$} \]
where $\widetilde E_0^m$ is an iid vector.  This and the continuous mapping theorem gives 
\[ \widetilde T_r^m \to \psi\bigl( L(s^\star)^{\otimes m}, \widetilde E_0^m \bigr) \quad \text{in distribution, as $r \to 0$}, \]
where $\psi$ is being applied component-wise.  The right-hand side is clearly conditionally iid, given $L(s^\star)$, which implies exchangeability.  Finally, since this holds for all $m$ and for all $(u_1,\ldots,u_m)$, we know that the process $\widetilde T_r$ has a distributional limit, and that limit is an exchangeable process.

\subsection{Proof of Proposition 2} 
\label{proof:sphere}

Let $\rho$ be a generic $3 \times 3$ rotation matrix that takes a location $s$ on the sphere $\calD$ to a new location $\rho s$, also on the sphere.  Write $\rho \prob$ for the distribution of the spatial process post-rotation by $\rho$, $(\rho S, X(\rho S), Y(\rho S))$.  From our stated distributional assumptions, namely, that $S$ is uniform on $\calD$ and $(X,Y)$ is isotropic and stationary, it follows that the joint distribution is invariant to rotations of the sphere, i.e., $\rho \prob = \prob$.  Consequently, the conditional coverage probability function satisfies 
\begin{equation}
\label{eq:cvg1}
c(\rho s^\star \mid \alpha, n, \rho \prob) = c(\rho s^\star \mid \alpha, n, \prob). 
\end{equation}
Next, by the Kriging-based construction of the conformal prediction interval, it is similarly easy to see that the coverage event is invariant to rotations too.  That is, if $\rho Z^n$ is the data after rotating the spatial locations by $\rho$, then 
\begin{align*}
\Gamma^\alpha(\rho Z^n; & \, \rho S_{n+1}, X(\rho S_{n+1})) \ni Y(\rho S_{n+1}) \\
& \iff \Gamma^\alpha(Z^n; S_{n+1}, X(S_{n+1})) \ni Y(S_{n+1}). 
\end{align*}
From here, it follows that the conditional coverage probability function satisfies
\begin{equation}
\label{eq:cvg2}
c(\rho s^\star \mid \alpha, n, \rho \prob) = c(s^\star \mid \alpha, n, \prob). 
\end{equation}
Putting together the equalities in \eqref{eq:cvg1} and \eqref{eq:cvg2}, we conclude that $s^\star \mapsto c(s^\star \mid \alpha, n, \prob)$ is constant.  But the result in Theorem~\ref{thm:valid.global} applies in present case; therefore, if the marginal coverage probability exceeds $1-\alpha$, then the constant conditional coverage probability must exceed $1-\alpha$ too, which completes the proof.










\subsection{Proof of Theorem~\ref{thm:valid.local}}
\label{proof:valid.local}

Let $T(s) = (X(s), Y(s))$ be the joint response-covariate process.  We have $m$ spatial locations in a neighborhood of $s^\star$, which can be expressed as $s_i = s^\star + r u_i$, $i=1,\ldots,m$, where $u_i \in \calU$ and $r$ is the neighborhood's radius.  Set $s_{m+1} = s^\star$ and define $X_i = X(s_i)$ and $Y_i=Y(s_i)$, for $i=1,\ldots,m,m+1$.  Let $\widetilde T_r^{m+1}$ denote the collection of all $m+1$ triples $(s_i, X_i, Y_i)$, including the $(m+1)^\text{st}$ entry that corresponds to the values at $s^\star$.  The non-conformity scores, $\delta_1,\ldots,\delta_{m+1}$, are functions of $\widetilde T_r^{m+1}$, and we will collect these into an $(m+1)$-vector $\delta_r^{m+1}$, whose dependence on the radius $r$ is now being made explicit.  Then 
\[ \Gamma_{s^\star}^\alpha(Z^m; X_{m+1}) \ni Y_{m+1} \iff \rank(m+1; \delta_r^{m+1}) > \lfloor (m+1) \alpha \rfloor, \]
where $\rank(m+1; \delta_r^{m+1})$ denotes the rank of the $(m+1)^\text{st}$ entry within the collection $\delta_r^{m+1}$.  By Proposition~\ref{thm:local.exchangeable}, we have that $\widetilde T_r^{m+1}$ converges in distribution, as $n \to \infty$ or, equivalently, as $r \to 0$, to an exchangeable $\widetilde T_0^{m+1}$.  Since the non-conformity score vector, $\delta_r^{m+1}$, is a continuous function of $\widetilde T_r^{m+1}$, it follows from the continuous mapping theorem that, as $r \to 0$, $\delta_r^{m+1}$ converges in distribution to $\delta_0^{m+1}$, say, which is just the non-conformity measure applied to the limit $\widetilde T_0^{m+1}$.  Now, turning to the ranks, each permutation of $1,\ldots,m+1$ is a possible value of the ranks, and it corresponds to an event $A$ in the space of $\delta_0^{m+1}$.  The boundary of that event consists of cases in which there are ties in $\delta_0^{m+1}$.  By our continuity assumptions, this boundary has probability 0, which makes $A$ a continuity set.  Then it follows from the Portmanteau lemma \citep[e.g.,][Lemma~2.2]{vaart1998} that 
\[ \prob(\delta_r^{m+1} \in A) \to \prob(\delta_0^{m+1} \in A), \quad r \to 0. \]
This holds for the $A$ corresponding to each configurations of the ranks which, in turn, implies 
\[ \rank(m+1; \delta_r^{m+1}) \to \rank(m+1; \delta_0^{m+1}) \quad \text{in distribution, as $r \to 0$}. \]
Since the limit $\widetilde T_0^{m+1}$ is exchangeable, the structure of the non-conformity measure implies that $\delta_0^{m+1}$ is also exchangeable.  Therefore, $\rank(m+1; \delta_r^{m+1})$ converges in distribution to $U \sim \text{Unif}(\{1,\ldots,m,m+1\})$, so
\[ \lim_{n \to \infty} \prob^{m+1}\{ \Gamma_{s^\star}^\alpha(Z^m; X_{m+1}) \ni Y_{m+1} \} = \prob\{ U > \lfloor (m+1)\alpha \rfloor\}. \]
Finally, the probability on the right-hand side is $1-\alpha + O(m^{-1})$, so that the limiting coverage probability is approximately $1-\alpha$ when $m$ is large.








\section{Theoretical efficiency of GSCP}
\label{S:efficiency}

\subsection{Setup and statement of the result}

Here we establish an asymptotic, theoretical efficiency result for (a version of) our global spatial conformal prediction framework.  Since the focus is on asymptotics, we can consider a simplified version, a variation on the so-called {\em split conformal} prediction strategy \citep[e.g.,][]{lei2018distribution, lei2014distribution, vovk2005algorithmic}.  Without any real loss of generality, assume we have a sample of $2n$ location-covariate-response triples, i.e., $Z_i = (S_i, X_i, Y_i)$ for 
\[ i \in \calI_1^n = \{1,\ldots,n\} \quad \text{and} \quad i \in \calI_2^n = \{n+1,\ldots,2n\}. \]
We will use the $\calI_2^n$ data to estimate the prediction mean and standard deviation functions, and the $\calI_1^n$ data to do conformal prediction.  This simplifies the analysis considerably.  To see this, let $(\hat\mu_n, \hat\sigma_n)$ be the prediction mean and standard deviation functions estimated based on data of size $n$ in $\calI_2^n$.  If we are going to apply the non-conformity measure only to data in $\calI_1^n$, then it is just applying a fixed function to each $Z_i$,  
\[ \Delta_n(Z_i) = \Bigl| \frac{Y_i - \hat\mu_n(S_i,X_i)}{\hat\sigma_n(S_i,X_i)} \Bigr|, \quad i \in \calI_1^n, \]
and we do not need a subscript ``$(i)$'' to indicate that observation $i$ was excluded in training the prediction rule.  That is, in split conformal, the data (in $\calI_2^n$) for training the prediction rule does not mix with the data (in $\calI_1^n$) for constructing the prediction interval as it does the in full conformal prediction algorithm.  

Under this setup, the (split) conformal prediction interval can take a relatively simple form.  Let $\hat q_{n,\alpha}$ denote the upper $\alpha/2$ quantile of $\{\Delta_n(Z_i): i \in \calI_1^n\}$.  If we assume symmetry in the distribution of the signed $\Delta_n(Z_i)$'s---see below---then a $100(1-\alpha)$\% (split) conformal prediction interval for a new response value $\tilde Y$ associated with a new location-covariate pair $(\tilde S,\tilde X)$, is 
\[ \hat\mu_n(\tilde S, \tilde X) \pm \hat\sigma_n(S,X) \, \hat q_{n,\alpha}. \]
Clearly, the width of this interval is $2 \hat\sigma_n(\tilde S, \tilde X) \, \hat q_{n,\alpha}$.  Our goal is to show that this is nearly the width of the ``optimal'' prediction interval, as $n \to \infty$, and hence that (split) conformal is not only valid but asymptotically efficient.  

For the analysis that follows, we assume that the working model presented in Section~\ref{s:back:spatial} is correct, i.e., that the structural mean is linear in covariates $X$, that there is an underlying Gaussian process $\theta$ with isotropic \Matern\ covariance function depending on parameters $(\sigma^2, \phi, \kappa)$, and a Gaussian nugget with variance $\tau^2$.  This assumption is necessary because very little is known about the convergence properties of the Kriging estimates under misspecification; in other words, we would not be able to characterize an ``optimal'' prediction interval under weaker assumptions.  

When we consider ``$n \to \infty$,'' we are assuming an infill asymptotics regime wherein the range of spatial locations $\calD$ is a fixed, compact set but the number $n$ (or $2n$) of observed triples $(S,X,Y)$ is going to infinity.  Also assume that the range of $X$ is compact.  Let $\mu^\star$ and $\sigma^\star$ be the true prediction mean and standard deviation functions under the working model.  More specifically, $\mu^\star = \mu_\theta^\star$ depends on the Gaussian process $\theta$ as is given by 
\[ \mu^\star(s,x) = x^\top \beta^\star + \theta(s), \]
and $\sigma^\star(s,x) \equiv \tau^\star$ is a constant, the nugget standard deviation.  If all these starred quantities were known, then one could construct a prediction interval for a new response $\tilde Y$ associated with a new location-covariate pair $(\tilde S, \tilde X)$ as follows.  Define 
\[ \Delta^\star(\tilde Z) = \Bigl| \frac{\tilde Y - \mu^\star(\tilde S, \tilde X)}{\sigma^\star(\tilde S, \tilde X)} \Bigr|, \quad \tilde Z=(\tilde S, \tilde X, \tilde Y), \]
which is just the non-conformity measure applied to $\tilde Z$ using the true prediction mean and standard deviation functions.  Under the true working model, the standardized residual in the above display is standard normal.  Let $q_\alpha^\star$ denote the upper $\alpha/2$ quantile of $\text{Normal}(0,1)$.  Then the ``optimal'' prediction interval for $\tilde Y$ at $(\tilde S, \tilde X)$ would be 
\[ \mu^\star(\tilde S, \tilde X) \pm q_\alpha^\star \, \sigma^\star(\tilde S, \tilde X). \]
Clearly, the length of this optimal prediction interval is $2 q_\alpha^\star \, \sigma^\star(\tilde S, \tilde X)$.  Define the absolute difference (ignoring the factor 2) between the widths of the split conformal and optimal prediction intervals as 
\[ \Xi_n(\tilde S, \tilde X) = |\hat q_{n,\alpha} \, \hat\sigma_n(\tilde S, \tilde X) - q_\alpha^\star \, \sigma^\star(\tilde S, \tilde X)|, \]
which is a function of the data $\{Z_i: i \in \calI_2^n\}$ and the new location-covariate pair $(\tilde S, \tilde X)$.  Our claim is that $\Xi_n(\tilde S, \tilde X)$ vanishes in probability at a certain rate as $n \to \infty$.  This is similar to what \citet{lei2018distribution} prove in their Theorem~6. 

Finally, given a vanishing sequence $\eta_n \to 0$, define the event $\mathcal{A}_n$ as 
\[ \mathcal{A}_n = \Bigl\{ \Bigl\| \frac{\hat\mu_n - \mu^\star}{\hat\sigma_n} \Bigr\|_\infty \vee \Bigl\| \frac{\hat\sigma_n - \sigma^\star}{\hat\sigma_n} \Bigr\|_\infty \leq \eta_n \Bigr\}, \]
where the mean and standard deviations are functions of location-covariate pairs in a compact domain, and $\|\cdot\|_\infty$ is the supremum norm.  Note that the best possible rate is $\eta_n \sim c_n n^{-1/2}$, where $c_n \to \infty$ arbitrarily slowly.  

\begin{thm}
\label{thm:efficiency}
If $\eta_n \to 0$ is such that $\prob(\mathcal{A}_n) \to 1$ as $n \to \infty$, 
then, 
\[ \Xi_n(\tilde S, \tilde X) = O_\prob\{\eta_n(\log n)^{1/2}\}, \quad n \to \infty. \]
\end{thm}

\begin{proof}
See Section~\ref{SS:eff.proof} below.
\end{proof}

Here we make two remarks about the theorem---one about the condition and the other about the conclusion.  First, \citet{lei2018distribution} consider an event similar to our $\mathcal{A}_n$.  Ours appears more complicated because we work with standardized residuals and, therefore, rely on both prediction mean and standard deviation functions, whereas they use unstandardized residuals and only analyze a prediction mean function.  It is well known that under infill asymptotics, the estimators of the spatial correlation parameters generally are not consistent \citep[e.g.,][]{zhang2004}, but the prediction mean and standard deviation functions can be estimated consistently \citep[e.g.,][and references therein]{tang2019identifiability}.  Moreover, under certain regularity conditions as described in \cite{tang2019identifiability}, we expect that the rate $\eta_n$ in the above theorem would be $n^{-r}$, where $r=\frac{\kappa}{2(\kappa + 1)} < 1/2$ and $\kappa > 0$ is the $\Matern$ covariance function's smoothness parameter.  Whether the assumption of Theorem~\ref{thm:efficiency} holds for this $\eta_n$ has yet to be confirmed.  

Second, the logarithmic term in the rate is a result of the proof technique and so likely is not needed.  Therefore, we conjecture that the result could be improved to $\Xi_n(\tilde S, \tilde X) = O_\prob(\eta_n)$, but this slight improvement in the rate would require a different or substantially more involved proof.  So we leave establishing the improved convergence rate result as a topic of future research.

\subsection{Proof of Theorem~\ref{thm:efficiency}}
\label{SS:eff.proof}

To start, write 
\[ \Xi_n(\tilde S, \tilde X) \leq \hat\sigma_n(\tilde S, \tilde X) \, | \hat q_{n,\alpha} - q_\alpha^\star| + |\hat\sigma_n(\tilde S, \tilde X) - \sigma^\star(\tilde S, \tilde X)| \, q_\alpha^\star. \]
The condition $\prob(\mathcal{A}_n) \to 1$ has two important implications: first,  $\hat\sigma_n(\tilde S, \tilde X) = O_\prob(1)$; second, $|\hat\sigma_n(\tilde S, \tilde X) - \sigma^\star(\tilde S, \tilde X)| = O_\prob(\eta_n)$.  Therefore, if we can show that $|\hat q_{n,\alpha} - q_\alpha^\star| = O_\prob\{\eta (\log n)^{1/2}\}$, then we are done.  

Let $\hat q_{n,\alpha}^\star$ denote the upper $\alpha/2$ quantile of the empirical distribution of 
\[ \{\Delta^\star(Z_i): i \in \calI_1^n\}. \]
Then we have 
\[ |\hat q_{n,\alpha} - q_\alpha^\star| \leq |\hat q_{n,\alpha} - \hat q_{n,\alpha}^\star| + |\hat q_{n,\alpha}^\star - q_\alpha^\star|. \]
Then for a generic sequence $\zeta_n$,
\begin{align}
\prob\{|\hat q_{n,\alpha} &- q_\alpha^\star| > \zeta_n\} \notag \\
& \leq \prob\{|\hat q_{n,\alpha} - \hat q_{n,\alpha}^\star| + |\hat q_{n,\alpha}^\star - q_\alpha^\star| > \zeta_n\} \notag \\
& \leq \prob\{|\hat q_{n,\alpha} - \hat q_{n,\alpha}^\star| > \tfrac12 \zeta_n\} + \prob\{|\hat q_{n,\alpha}^\star - q_\alpha^\star| > \tfrac12 \zeta_n\}. \label{eq:qbound}
\end{align}
It remains to show that, for $\zeta_n \sim \eta_n (\log n)^{1/2}$, both terms in \eqref{eq:qbound} are $o(1)$.  

Start with the second term in \eqref{eq:qbound}.  Given the pair of processes $(X,\theta)$, the $Z_i$'s are iid, so, conditionally, the empirical quantile $\hat q_{n,\alpha}^\star$ would converge to the standard normal quantile $q_\alpha^\star$ at a $n^{-1/2}$ rate.  Indeed,
\[ \prob\{|\hat q_{n,\alpha}^\star - q_\alpha^\star| > \tfrac12 \zeta_n\} = \E \bigl[ \prob\{|\hat q_{n,\alpha}^\star - q_\alpha^\star| > \tfrac12 \zeta_n \mid X, \theta\} \bigr], \]
where the right hand side is the expectation of a version of the conditional distribution, given the processes $(X,\theta)$.  For almost all $(X,\theta)$, the conditional probability vanishes since $\zeta_n \gg n^{-1/2}$.  The conditional probability is also bounded, so the dominated convergence theorem implies that its expectation, $\prob\{|\hat q_{n,\alpha}^\star - q_\alpha^\star| > \tfrac12 \zeta_n\}$, vanishes as well.  

Now take the first term in \eqref{eq:qbound}.  To start, for any $z=(s,x,y)$, it is easy to check that 
\[ |\Delta_n(z) - \Delta^\star(z)| \leq \Bigl| \frac{\hat\mu_n(s,x) - \mu^\star(s,x)}{\hat\sigma_n(s,x)} \Bigr| + \Bigl| \frac{\sigma^\star(s,x)}{\hat\sigma_n(s,x)} - 1 \Bigr| \Delta^\star(z). \]
Define the event 
\[ \mathcal{B}_n = \Bigl\{ \max_{i \in \calI_1^ n} \Delta^\star(Z_i) \leq c(\log n)^{1/2} \Bigr\}, \]
for a suitable constant $c > 0$.  If we restrict ourselves to samples in the event $\mathcal{A}_n \cap \mathcal{B}_n$, then, for any $z$, 
\[ |\Delta_n(z) - \Delta^\star(z)| \leq \eta_n\{1 + \Delta^\star(z)\} \lesssim \eta_n (\log n)^{1/2}. \]
As in \citet{lei2018distribution}, if all the differences are uniformly bounded, then the corresponding difference in sample quantiles is also bounded.  That is, the above display implies $|\hat q_{n,\alpha} - \hat q_{n,\alpha}^\star| \lesssim \eta_n (\log n)^{1/2}$.  Therefore, 
\[ 1-\prob\{|\hat q_{n,\alpha} - \hat q_{n,\alpha}^\star| > \tfrac12 \zeta_n\} \geq \prob(\mathcal{A}_n \cap \mathcal{B}_n). \]
We know that $\prob(\mathcal{A}_n) \to 1$ so it suffices to show that $\prob(\mathcal{B}_n) \to 1$ too.  Do the same conditioning operation as above:
\[ \prob(\mathcal{B}_n^c) = \E\{ \prob(\mathcal{B}_n^c \mid X,\theta) \}. \]
Given the processes $(X,\theta)$, the the standardized residuals in $\Delta^\star(Z_i)$ are iid $\text{Normal}(0,1)$ and, hence, we can apply the standard Gaussian maximal inequality to conclude that $\prob(\mathcal{B}_n^c \mid X,\theta) \to 0$ for almost all $(X,\theta)$.  Then the dominated convergence theorem again gives $\prob(\mathcal{B}_n) \to 1$, from which we can conclude that $\prob\{|\hat q_{n,\alpha} - \hat q_{n,\alpha}^\star| > \tfrac12 \zeta_n\} \to 0$, hence completing the proof.





\section{Additional simulation results}
\label{s:A3}

This appendix presents some additional simulation results. First, Figure \ref{f:sim:overview} shows a realization from each scenario to help with visualizing the various spatial correlation patterns. Second, we compare two variations of the proposed (local) spatial conformal prediction strategy: a split version where the working model parameters are estimated based on one half of the data, and then conformal with the estimated parameters applied on the other half, and a plug-in version where the parameters are estimated via and conformal prediction is applied to the entire data set.  Finally, we evaluate the proposed methods in applications where the data sets include covariates, and study sensitivity to the choice of tuning parameters and the base model.

\begin{figure}
\caption{One realization each from simulation scenarios with $N=40.$}\label{f:sim:overview}
\centering
\includegraphics[width=0.6\textwidth]{figs/sim_scenarios_overview_2.pdf}
\end{figure}

\subsection{Split versus plug-in conformal prediction results}

In split conformal prediction \citep[e.g.,][]{lei2018distribution, lei2014distribution, vovk2005algorithmic}, the training data are split (equally and randomly) into two groups. The first group is used to estimate the model parameters $\Theta$ and is then discarded.  The algorithm then proceeds with only the remaining training observations for conformal prediction, including evaluating the base model (given parameters estimates) and plausibilities.  Table \ref{t:sim_conformal} compares the results using the settings defined in the main document with and without split conformal.  Although split conformal maintains good coverage, the intervals are a little wider on average and thus the interval scores are higher.

\begin{table}
\caption{\label{t:sim_conformal}Performance comparison for simulation scenarios (``Scen'') with (``- split'') and without split conformal. The metrics are the empirical coverage of 90\% prediction intervals (``Cov90''), the width of prediction intervals (``Width'') and the interval score (``IntScore''), each averaged over location and dataset. The methods are global (GSCP) and local (LSCP) conformal prediction. }
\begin{tabular}{ll|ccc|ccc}
 & & \multicolumn{3}{c|}{$N=20$} & \multicolumn{3}{c}{$N=40$}   \\
 Scen & Method & Cov90 & Width & IntScore & Cov90 & Width & IntScore \\ 
  \hline
\ref{scen:stationary}
& GSCP - split & 0.905 & 5.11 & 6.32 & 0.893 & 4.19 & 5.35 \\       
& GSCP         & 0.906 & 4.67 & 5.78 & 0.897 & 4.00 & 5.06 \\               
& LSCP - split & 0.870 & 4.92 & 6.81 & 0.883 & 4.16 & 5.47 \\      
& LSCP         & 0.890 & 4.57 & 5.95 & 0.891 & 3.99 & 5.12 \\   
&&&&&&&\vspace{-6pt}\\
\ref{scen:cubic}
& GSCP - split & 0.900 & 39.32 & 78.75 & 0.897 & 26.06 & 51.36 \\   
& GSCP & 0.895 & 33.62 & 67.16 & 0.897 & 22.12 & 43.80 \\           
& LSCP - split & 0.885 & 36.76 & 72.26 & 0.907 & 25.28 & 43.26 \\   
& LSCP & 0.896 & 31.05 & 58.37 & 0.910 & 21.74 & 36.47 \\           
&&&&&&&\vspace{-6pt}\\
\ref{scen:skewed}
& GSCP - split & 0.905 & 5.31 & 7.19 & 0.894 & 4.25 & 5.61 \\       
& GSCP & 0.908 & 4.79 & 6.32 & 0.895 & 4.05 & 5.27 \\               
& LSCP - split & 0.869 & 5.17 & 7.42 & 0.889 & 4.23 & 5.62 \\       
& LSCP & 0.893 & 4.65 & 6.27 & 0.893 & 4.03 & 5.24 \\               
&&&&&&&\vspace{-6pt}\\
\ref{scen:prod} 
& GSCP - split & 0.906 & 8.03 & 12.27 & 0.896 & 6.51 & 10.62 \\     
& GSCP & 0.902 & 7.06 & 11.06 & 0.895 & 6.25 & 10.25 \\             
& LSCP - split & 0.873 & 7.58 & 12.79 & 0.892 & 6.30 & 10.32 \\     
& LSCP & 0.892 & 6.84 & 11.23 & 0.895 & 6.08 &  9.74 \\             
&&&&&&&\vspace{-6pt}\\
\ref{scen:gfun} 
& GSCP - split & 0.900 & 7.47 & 11.02 & 0.895 & 5.44 & 7.79 \\      
& GSCP & 0.900 & 6.51 &  9.40 & 0.898 & 5.03 & 7.11 \\              
& LSCP - split & 0.869 & 7.21 & 11.07 & 0.888 & 5.45 & 7.50 \\      
& LSCP & 0.887 & 6.36 &  9.06 & 0.892 & 5.05 & 6.75 \\              
&&&&&&&\vspace{-6pt}\\
\ref{scen:eastwest} 
& GSCP - split & 0.901 & 2.99 & 3.83 & 0.897 & 2.67 & 3.64 \\       
& GSCP & 0.894 & 2.78 & 3.69 & 0.896 & 2.63 & 3.60 \\               
& LSCP - split & 0.861 & 2.79 & 3.81 & 0.889 & 2.47 & 3.23 \\       
& LSCP & 0.878 & 2.66 & 3.47 & 0.895 & 2.38 & 3.06 \\               
&&&&&&&\vspace{-6pt}\\
\ref{scen:nugget} 
& GSCP - split & 0.904 & 4.14 & 5.14 & 0.897 & 3.02 & 4.01 \\       
& GSCP & 0.905 & 3.63 & 4.55 & 0.896 & 2.77 & 3.71 \\              
& LSCP - split & 0.870 & 4.00 & 5.49 & 0.890 & 2.94 & 3.86 \\       
& LSCP & 0.888 & 3.53 & 4.59 & 0.896 & 2.70 & 3.46 \\               
&&&&&&&\vspace{-6pt}\\
\ref{scen:spike} 
& GSCP - split & 0.911 & 3.85 & 4.79 & 0.897 & 2.25 & 2.89 \\      
& GSCP & 0.906 & 3.04 & 3.74 & 0.899 & 1.93 & 2.45 \\               
& LSCP - split & 0.872 & 3.72 & 5.01 & 0.888 & 2.23 & 2.92 \\       
& LSCP & 0.880 & 3.00 & 3.91 & 0.895 & 1.92 & 2.46 \\               
\end{tabular}
\end{table}

\subsection{Simulations with covariates}

To study the performance of the method with covariates, we consider two additional scenarios
\begin{enumerate}[label=\arabic*, labelsep=0pt]
    \item[9] \label{scen:addcov}. $Y(s) = 1 + 2X(s) + Z(s) + E(s)$; 
    \item[10] \label{scen:prodcov}. $Y(s) = 1 + X(s)Z(s) + E(s)$. 
\end{enumerate}
where the covariate process $X(s)$ is a mean-zero stationary Gaussian process processes with $\Matern$ covariance function with variance $\sigma^2=3$, range $\phi=0.1$ and smoothness $\kappa = 0.7$. Table \ref{t:sim_withCov} summarizes the performance of the methods with and without using the covariate in the analysis.  In Scenario 9, the covariate enters the model additively as assumed by all methods and thus including the covariate reduces the interval score.  In Scenario 10, the covariate is not additive and the Kriging models are misspecified. However, all conformal methods maintain coverage near the nominal level when $N=40$.

\begin{table}
\caption{\label{t:sim_withCov}Performance comparison for simulation scenarios with a covariate. The metrics are the empirical coverage of 90\% prediction intervals (``Cov90''), the width of prediction intervals (``Width'') and the interval score (``IntScore''), each averaged over location and dataset. The methods are global (GSCP) and local (LSCP) conformal prediction, stationary and Gaussian Kriging (``Kriging'').  Methods that use the covariate are denoted ``(X)''. }
\centering
\begin{tabular}{ll|ccc|ccc}
 & & \multicolumn{3}{c|}{$N=20$} & \multicolumn{3}{c}{$N=40$}   \\
 Scen & Method & Cov90 & Width & IntScore & Cov90 & Width & IntScore \\ 
  \hline
9
& GSCP       & 0.893 & 7.53 & 9.61 & 0.899 & 5.54 & 6.88\\ 
& GSCP (X)    & 0.896 & 4.48 & 5.90 & 0.900 & 4.04 & 5.02\\ 
& LSCP       & 0.877 & 7.45 & 9.92 & 0.897 & 5.52 & 6.94\\ 
& LSCP (X)    & 0.885 & 4.62 & 6.10 & 0.896 & 4.02 & 5.06\\ 
& Kriging    & 0.909 & 7.89 & 9.59 & 0.888 & 5.39 & 6.96\\ 
& Kriging (X) & 0.903 & 4.76 & 5.89 & 0.889 & 3.91 & 5.03\\ 
&&&&&&&\vspace{-6pt}\\
10
& GSCP       & 0.898 & 7.73 & 11.16 & 0.899 & 5.68 & 7.68\\ 
& GSCP (X)    & 0.898 & 7.65 & 11.09 & 0.899 & 5.67 & 7.62\\ 
& LSCP       & 0.885 & 7.60 & 10.90 & 0.897 & 5.67 & 7.56\\ 
& LSCP (X)    & 0.886 & 7.52 & 10.90 & 0.897 & 5.65 & 7.51\\ 
& Kriging    & 0.910 & 8.19 & 11.21 & 0.885 & 5.49 & 7.79\\ 
& Kriging (X) & 0.909 & 8.09 & 11.13 & 0.880 & 5.44 & 7.74\\ 
\end{tabular}
\end{table}

\subsection{Sensitivity analysis}\label{s:sim:sensitivity}

Here we study the proposed method's sensitivity to tuning parameter choice and to the initial $\Matern$ covariance parameter estimates. Because these results are not addressing parameter estimation, in this subsection we estimate the model parameters using the entire dataset (or treat them as fixed, as described in the next paragraph) and then perform leave-one-out cross validation.  Figure \ref{f:sensitivity} gives the average interval score of LSCP with different tuning parameter $\eta$ for Scenarios \ref{scen:gfun} and \ref{scen:eastwest}, along with the average interval score of  GSCP, Kriging, and laGP. We find that there is a V-shaped relationship between resulting interval score and tuning parameter $\eta$. The average performance is the best when $\eta=0.2$, and LSCP outperforms the other methods regardless the selection of $\eta$. To achieve the best performance, in practice, we would select $\eta$ using cross validation. 

\begin{figure}
\centering 
\includegraphics[width=1\textwidth]{figs/sim_sensitivity.pdf}
\caption{Sensitivity analysis for simulation Scenarios \ref{scen:gfun} and \ref{scen:eastwest}.  The interval score for the LSCP method is computed for several kernel bandwidth parameters, $\eta$, and compared to other methods.}
\label{f:sensitivity}
\end{figure}

The proposed spatial conformal prediction methods rely on estimation of the $\Matern$ covariance parameter, $\Theta$. To test for sensitivity to this step, we repeat simulation Scenario 1 since in this case the optimal $\Theta$ is known to be the true $\Matern$ covariance parameters used to generate the data. When generating prediction intervals, rather than estimating the $\Matern$ covariance parameter $\hat\Theta$ from the simulated data, we use the values of $\hat\Theta$ listed in the first four columns of Table \ref{t:sensitivity_thetaHat} for conformal prediction and Kriging, where each of the four parameters is modified by plus or minus 50\% of the true value. We use the GSCP method, $\alpha=0.1$ and $N=20$ for this sensitivity analysis. The corresponding average coverage and average width of the conformal prediction intervals and the Kriging intervals are listed in the last four columns.

\begin{table}
\caption{\label{t:sensitivity_thetaHat} Sensitivity analysis for the simulation study. The first four columns represent the $\Matern$ covariance parameter ($\Theta$) used to generate the conformal and Kriging prediction intervals; the first row gives the true value. The last four columns give the average coverage and interval width of the 90\% intervals produced by the conformal and Kriging methods when using the corresponding $\Theta$.}
\centering
\begin{tabular}{cccc|cc|cc}
Nugget& Partial sill & Range  & Smoothness & \multicolumn{2}{c|}{GSCP} & \multicolumn{2}{c}{Kriging} \\
($\tau^2$) &($\sigma^2$) &($\phi$)& ($\kappa$) & Cov90 & Width & Cov90 & Width \\  \hline
1.0 & 3.0 & 0.10 & 0.70 & 89.69\% & 4.57 & 90.05\% & 4.60 \\ 
1.5 & 3.0 & 0.10 & 0.70 & 89.69\% & 4.61 & 93.73\% & 5.24 \\ 
0.5 & 3.0 & 0.10 & 0.70 & 89.70\% & 4.59 & 82.76\% & 3.84 \\ 
1.0 & 4.5 & 0.10 & 0.70 & 89.68\% & 4.63 & 92.64\% & 5.04 \\ 
1.0 & 1.5 & 0.10 & 0.70 & 89.68\% & 4.63 & 85.25\% & 4.09 \\ 
1.0 & 3.0 & 0.15 & 0.70 & 89.68\% & 4.60 & 86.71\% & 4.21 \\ 
1.0 & 3.0 & 0.05 & 0.70 & 89.67\% & 4.61 & 95.13\% & 5.54 \\ 
1.0 & 3.0 & 0.10 & 1.05 & 89.69\% & 4.58 & 85.56\% & 4.08 \\ 
1.0 & 3.0 & 0.10 & 0.35 & 89.70\% & 4.63 & 94.98\% & 5.53
\end{tabular}
\end{table}

While using the true $\Theta$ leads to slightly more narrow intervals, the average coverage of the conformal prediction intervals are between 89.67\% and 89.70\% across $\Theta$.  In contrast, the coverage of the Kriging intervals range from 82.76\% to 95.13\% across $\Theta$.  Therefore, the conformal methods are more robust to the estimation of $\Matern$ covariance parameters than the parametric Kriging model.

\begin{singlespace}
	\bibliographystyle{apalike} 
	\bibliography{refs}
\end{singlespace}